\documentclass[envcountsame]{llncs}
\usepackage{amsfonts}
\usepackage{amssymb}
\usepackage{mathrsfs}
\usepackage[]{graphicx}
\usepackage{comment}
\usepackage{extarrows}
\usepackage{wrapfig}

\begin{document}
\excludecomment{cver}
\includecomment{fver}
\title{Parameterized complexity results\\for $1$-safe Petri nets}
\author{M.~Praveen \and Kamal Lodaya}
\institute{The Institute of Mathematical Sciences, Chennai 600113, India}
\maketitle
\newcommand{\macNat}{\mathbb{N}}

\newcommand{\Oh}{\mathcal{O}}

\newcommand{\mactw}{\text{tw}}
\newcommand{\macpw}{p{\scriptstyle w}}
\newcommand{\macheight}{h}
\newcommand{\mactwo}{t{\scriptstyle w}}

\newcommand{\maccc}{\textsc}
\newcommand{\name}{\textsf}
\newcommand{\concept}{\textbf}

\newcommand{\macpspace}{\maccc{Pspace}}
\newcommand{\macexpsp}{\maccc{Expspace}}
\newcommand{\macparapsp}{\maccc{ParaPspace}}
\newcommand{\macfpt}{\maccc{Fpt}}
\newcommand{\macwone}{\maccc{W[1]}}
\newcommand{\macwtwo}{\maccc{W[2]}}
\newcommand{\macxp}{\maccc{Xp}}
\newcommand{\macnp}{\maccc{Np}}
\newcommand{\macparanp}{\maccc{ParaNp}}
\newcommand{\macconp}{\maccc{Co-Np}}
\newcommand{\macp}{\maccc{Ptime}}
\newcommand{\macyes}{\maccc{Yes}}
\newcommand{\macoh}{\mathcal{O}}
\newcommand{\macilp}{\maccc{Ilp}}
\newcommand{\macsdp}{\maccc{Signed Digraph Pebbling}}
\newcommand{\macpspl}{\maccc{Propositional STRIPS Planning}}

\newcommand{\macppwcnfsat}{p-\textsc{Pw-Sat}}
\newcommand{\macnlcp}{\textsc{Nlcp}}
\newcommand{\maccsp}{\textsc{Csp}}

\newcommand{\macpopterm}[1]{\name{#1}}

\newcommand{\macpset}[1]{\mathscr{P}(#1)}

\newcommand{\macrestr}[1]{\lceil {#1}}

\newcommand{\macpv}{\Phi}
\newcommand{\macpvo}{q}
\newcommand{\macpvh}{s}
\newcommand{\macpvf}{t}
\newcommand{\macpvidx}{i}
\newcommand{\macclause}{C}
\newcommand{\macnumcl}{m}
\newcommand{\macclidx}{j}

\newcommand{\macpcnff}{\mathcal{F}}
\newcommand{\macpart}{part}
\newcommand{\mactget}{tg}
\newcommand{\macpartidx}{r}
\newcommand{\macnumpv}{n}
\newcommand{\macnumpart}{k}

\newcommand{\Defref}[1]{Definition~\ref{#1}}
\newcommand{\defref}[1]{Def.~\ref{#1}}
\newcommand{\Thmref}[1]{Theorem~\ref{#1}}
\newcommand{\thmref}[1]{Theorem~\ref{#1}}
\newcommand{\corref}[1]{Corollary~\ref{#1}}
\newcommand{\Lemref}[1]{Lemma~\ref{#1}}
\newcommand{\lemref}[1]{Lemma~\ref{#1}}
\newcommand{\Propref}[1]{Proposition~\ref{#1}}
\newcommand{\propref}[1]{Prop.~\ref{#1}}
\newcommand{\Claimref}[1]{Claim~\ref{#1}}
\newcommand{\claimref}[1]{Claim~\ref{#1}}
\newcommand{\algoref}[1]{Algorithm~\ref{#1}}
\newcommand{\Algoref}[1]{Algorithm~\ref{#1}}
\newcommand{\Figref}[1]{Figure~\ref{#1}}
\newcommand{\figref}[1]{Fig.~\ref{#1}}
\newcommand{\Tabref}[1]{Table~\ref{#1}}
\newcommand{\tabref}[1]{Table~\ref{#1}}
\newcommand{\Secref}[1]{Section~\ref{#1}}
\newcommand{\secref}[1]{Sect.~\ref{#1}}
\newcommand{\Chref}[1]{Chapter~\ref{#1}}
\newcommand{\chref}[1]{Chap.~\ref{#1}}
\newcommand{\apndref}[1]{Appendix~\ref{#1}}
\newcommand{\Apndref}[1]{App.~\ref{#1}}

\newcommand{\macgraph}{G}
\newcommand{\macvertexo}{v}
\newcommand{\macverto}{v}
\newcommand{\macvertext}{s}
\newcommand{\macuedge}{e}
\newcommand{\macedge}{e}
\newcommand{\macvertexs}{V}
\newcommand{\macedges}{E}
\newcommand{\maccols}{S}
\newcommand{\maccolor}{\ell}
\newcommand{\macatleast}{atLeast}
\newcommand{\macatmost}{atMost}
\newcommand{\macproper}{proper}
\newcommand{\macbag}{B}
\newcommand{\macvertcov}{VC}
\newcommand{\macvcnum}{k}
\newcommand{\macpath}{\mu}
\newcommand{\macdgraph}{D}
\newcommand{\macarcs}{A}
\newcommand{\macred}{Red}
\newcommand{\macblue}{Blue}
\newcommand{\macbvertex}{b}
\newcommand{\mactree}{\mathcal{T}}
\newcommand{\mactn}{\tau}
\newcommand{\macnodes}{\mathit{nodes}}

\newcommand{\macdef}{\overset{\bigtriangleup}{=}}

\newcommand{\macfovo}{x}
\newcommand{\macfovt}{y}
\newcommand{\macrmo}{\alpha}
\newcommand{\macrmt}{\beta}
\newcommand{\macrmh}{\gamma}

\newcommand{\macserpl}{s}
\newcommand{\macglpl}{g}
\newcommand{\macttr}{t}
\newcommand{\macftr}{f}
\newcommand{\mactdel}{td}
\newcommand{\macfdel}{fd}
\newcommand{\mactcnt}{tc}
\newcommand{\macfcnt}{fc}
\newcommand{\macpartindo}{t{\scriptstyle \uparrow}}
\newcommand{\macpartindt}{f{\scriptstyle \uparrow}}
\newcommand{\macpartctro}{t{\scriptstyle u}}
\newcommand{\macpartctrt}{f{\scriptstyle l}}
\newcommand{\macdpthctro}{d}
\newcommand{\mactpl}{x}
\newcommand{\macfpl}{\overline{\mactpl}}
\newcommand{\mactcpl}{y}
\newcommand{\macfcpl}{\overline{\mactcpl}}
\newcommand{\macclpl}{C}
\newcommand{\macpartinds}{P_{1}}

\newcommand{\macplace}{p}
\newcommand{\macplaceo}{p}
\newcommand{\mactrans}{t}
\newcommand{\mactranso}{t}
\newcommand{\macplaces}{P}
\newcommand{\macnumplaces}{m}
\newcommand{\mactranss}{T}
\newcommand{\macben}{ben}
\newcommand{\macpre}{\mathit{Pre}}
\newcommand{\macpost}{\mathit{Post}}
\newcommand{\macmark}{M}
\newcommand{\macmarks}{\pi}
\newcommand{\macnet}{\mathcal{N}}
\newcommand{\macplaceidx}{i}
\newcommand{\macplaceidxt}{j}
\newcommand{\macnumtranstype}{l}
\newcommand{\macvar}{\mathrm{\mathit{int}}}
\newcommand{\mactypeidx}{j}
\newcommand{\macarcw}{w}
\newcommand{\macplvar}{I}
\newcommand{\macplvars}{\mathit{Int}}
\newcommand{\macsplaces}{\mathcal{S}}
\newcommand{\macStep}[1]{\xLongrightarrow{#1}}
\newcommand{\macfirseqo}{\rho}
\newcommand{\macfslidx}{r}
\newcommand{\macipls}[1]{^{\bullet}#1}
\newcommand{\macopls}[1]{#1^{\bullet}}

\newcommand{\macdom}{dom}
\newcommand{\macdeg}{deg}
\newcommand{\macdomidx}{d}
\newcommand{\macdegidx}{l}
\newcommand{\macvaro}{q}
\newcommand{\macvaridx}{i}
\newcommand{\macconstridx}{j}
\newcommand{\macnumvar}{n}
\newcommand{\macnumconstr}{m}
\newcommand{\macconstro}{C}
\newcommand{\macsbr}[1]{[#1]}

\newcommand{\macltlfo}{\phi}
\newcommand{\macnext}{X}
\newcommand{\macuntil}{U}
\newcommand{\macseqidx}{j}
\newcommand{\macseqidxt}{k}
\newcommand{\macseqidxh}{l}
\newcommand{\macfsa}{\mathcal{A}}
\newcommand{\macfsba}{\mathcal{B}}

\newcommand{\macstates}{Q}
\newcommand{\macalph}{\Sigma}
\newcommand{\macletter}{\sigma}
\newcommand{\mactransrel}{\delta}
\newcommand{\macedgeconstr}{u}
\newcommand{\macstateo}{q}
\newcommand{\macfinss}{F}

\newcommand{\macprino}{I}
\newcommand{\macpro}{\Pi}
\newcommand{\macparamo}{\kappa}
\newcommand{\macalpho}{\Sigma}
\newcommand{\macparamdepo}{f}
\newcommand{\macparamdept}{g}
\newcommand{\macredmap}{A}
\newcommand{\macnuminst}{m}
\newcommand{\macinstidx}{i}

\begin{abstract}
  We associate a graph with a $1$-safe Petri net and study the
  parameterized complexity of various problems with parameters derived
  from the graph. With treewidth as the parameter, we give
  \macwone{}-hardness results for many problems about $1$-safe Petri
  nets. As a corollary, this proves a conjecture of Downey et.~al.
  about the hardness of some graph pebbling problems.  We consider the
  parameter benefit depth (that is known to be helpful in getting
  better algorithms for general Petri nets) and again give
  \macwone{}-hardness results for various problems on $1$-safe Petri
  nets.  We also consider the stronger parameter vertex cover number.
  Combining the well known automata-theoretic method and a powerful
  fixed parameter tractability (\macfpt{}) result about Integer Linear
  Programming, we give a \macfpt{} algorithm for model checking
  Monadic Second Order (MSO) formulas on $1$-safe Petri nets, with
  parameters vertex cover number and the size of the formula. 
\end{abstract}

\section{Introduction}
Petri nets are popular for modelling because they offer a succinct 
representation of loosely coupled communicating systems. 
Some powerful techniques are available but the complexity of analysis 
is high. In his lucid survey \cite{ESPCompl98}, Esparza summarizes 
the situation as follows: almost every interesting analysis question
on the behaviour of general Petri nets is \macexpsp-hard, and almost
every interesting analysis question on the behaviour of 1-safe Petri
nets is \macpspace-hard.  By considering special subclasses of nets
slightly better results can be obtained. 
Esparza points out that T-systems (also called marked graphs) and 
S-systems (essentially sequential transition systems)
are the largest subclasses where polynomial time algorithms are
available. We therefore look for a \emph{structural parameter} 
with respect to which some analysis problems remain tractable.

\paragraph{Parameterized complexity.}
A brief review will not be out of place here.
Let $\macalph$ be a finite alphabet in which instances $\macprino
\in \macalph^{*}$ of a problem $\macpro \subseteq \macalph^{*}$ are
specified, where $\macpro$ is the set of \macyes{} instances. The
complexity of a problem is stated in terms of the amount of
resources---space, time---needed by any algorithm solving it,
measured as a function of the size $|\macprino|$ of the problem instance. 
In parameterized complexity, introduced by Downey and Fellows \cite{DF99}, 
the dependence of resources needed is also measured in terms of a
parameter $\macparamo ( \macprino)$ of the input, which is usually
less than the input size $| \macprino|$. A parameterized problem is
said to be \name{fixed parameter tractable} (\macfpt) if it can be
solved by an algorithm with running time
$\macparamdepo(\macparamo(\macprino))\mathit{poly}(|\macprino|)$ where
$\macparamdepo$ is some computable function and $\mathit{poly}$ is a polynomial.
(Similarly, a \macparapsp{} algorithm \cite{FG03} is one that runs in space 
$\macparamdepo(\macparamo(\macprino))\mathit{poly}(|\macprino|)$.)

For example, consider the problem of checking that all strings
accepted by a given finite state automaton satisfy a given Monadic
Second Order (MSO) sentence. The size of an instance of this problem is
the sum of sizes of the automaton and the MSO sentence. If the size of
the MSO sentence is considered as a parameter, then this problem if
\macfpt{}, by B\"uchi, Elgot, Trakhtenbrot theorem \cite{Buchi}.

There is a  parameterized complexity class
\macwone{}, lowest in a hierarchy of intractable classes called the
\maccc{W}-hierarchy \cite{DF99} (similar to the polynomial time hierarchy). 
A parameterized problem
complete for $\macwone$ is to decide if there is an accepting
computation of at most $k$ steps in a given non-deterministic Turing
machine, where the parameter is $k$ \cite{DF99}. It is widely believed
that parameterized problems hard for \macwone{} are not \macfpt{}.
To prove that a problem is hard for a parameterized complexity
class, we have to give a \name{parameterized reduction} from a problem
already known to be hard to our problem. A parameterized reduction from
$(\macpro,\macparamo)$ to $(\macpro',\macparamo')$ is an algorithm
$\macredmap$ that maps problem instances in (resp.~outside) $\macpro$
to problem instances in (resp.~outside) $\macpro'$. There must be
computable functions $\macparamdepo$ and $\macparamdept$ and a polynomial
$p$ such that the algorithm $\macredmap$ on input $\macprino$ terminates 
in time $\macparamdepo(\macparamo(\macprino))p(|\macprino|)$ and
$\macparamo'(\macredmap(\macprino))\le
\macparamdept(\macparamo(\macprino))$, where $\macredmap ( \macprino)$
is the problem instance output by $\macredmap$.

\paragraph{Results.}
Demri, Laroussinie and Schnoebelen considered synchronized transition 
systems, a form of $1$-safe Petri nets \cite{DLS06} and showed that
the number of synchronizing components (processes) is not a parameter
which makes analysis tractable. Likewise, our first results are
negative. All parameters mentioned below are defined in
\secref{sec:preliminaries}.
\begin{itemize}
  \item With the pathwidth of the flow graph of the $1$-safe
    Petri net as parameter, reachability, coverability, Computational
    Tree Logic (CTL) and the complement of Linear Temporal Logic (LTL)
    model checking problems are all \macwone{}-hard, even when the
    size of the formula is a constant. In contrast, for the class of
    sequential transition systems and formula size as parameter,
    B\"uchi's theorem is that model checking for MSO logic is \macfpt.
  \item As a corollary, we also prove a conjecture of Downey, Fellows 
    and Stege that the \macsdp{} problem \cite[section 5]{DFS99} is 
    \macwone-hard when parameterized by treewidth.
  \item With the benefit depth of the $1$-safe Petri net as parameter,
    reachability, coverability, CTL and the complement of LTL model
    checking problems are \macwone{}-hard, even when the size of the
    formula is a constant.
\end{itemize}

We are luckier with our third parameter.
\begin{itemize}
  \item With the vertex cover number of the flow graph and 
    formula size as parameters, MSO model checking is \macfpt{}.
\end{itemize}

\paragraph{Perspective.}
As can be expected from the negative results, the class of 1-safe
Petri nets which are amenable to efficient analysis (i.e., those with
small vertex cover) is not too large.  But even for this class, a
reachability graph construction can be of exponential size, so just an
appeal to B\"uchi's theorem is not sufficient to yield our result.

Roughly speaking, our \macfpt\/ algorithm works well for systems which
have a small ``core'' (vertex cover), a small number of
``interface types'' with this core, but any number of component
processes using these interface types to interact with the core (see
\figref{fig:MSExample}).  Thus, we can have a large amount of conflict
and concurrency but a limited amount of causality.  Recall that
S-systems and T-systems have no concurrency and no conflict, respectively.
Since all we need from the logic is a procedure which produces an 
automaton from a formula, 
we are able to use the most powerful, MSO logic.
Our proofs combine the well known automata-theoretic method
\cite{Buchi,Vardi96,HAB97} with a powerful result about feasibility of
Integer Linear Programming (\macilp{}) parameterized by the number of
variables \cite{Lenstra83,Kannan87,FT87}.  


\paragraph{Related work.} 
Drusinsky and Harel studied nondeterminism, alternation and concurrency 
in finite automata from a complexity point of view \cite{DH84}.
Their results also hold for 1-bounded Petri nets.

The \macsdp{} problem considered by Downey, Fellows and Stege
\cite{DFS99} can simulate Petri nets. They showed that with treewidth
and the length of the firing sequence as parameters, the reachability
problem is \macfpt{}. They conjectured that with treewidth alone as
parameter, the problem is \macwone{}-hard. 

Fellows \emph{et al} showed that various graph layout problems that
are hard with treewidth as parameter (or whose complexity
parameterized by treewidth is not known) are \macfpt{} when
parameterized by vertex cover number \cite{FLMRS08}.  They also used
tractability of \macilp{} and extended feasibility to optimization.

\paragraph{Acknowledgements.} We thank the anonymous
\textsc{Concur} referees for providing detailed comments that helped in 
improving the presentation.

\section{Preliminaries}
\label{sec:preliminaries}

\subsection{Petri nets}
A Petri net is a 4-tuple $\macnet=(\macplaces,\mactranss,
\macpre,\macpost)$, $\macplaces$ a set of places, $\mactranss$ a set
of transitions, $\macpre:\macplaces\times \mactranss\to \{0,1\}$ (arcs
going from places to transitions) and $\macpost:\macplaces\times
\mactranss\to \{0,1\}$ (arcs going from transitions to places) the
incidence functions.  A place $\macplaceo$ is an \macpopterm{input}
(\macpopterm{output}) place of a transition $\mactranso$ if
$\macpre(\macplaceo,\mactranso) = 1$ ($\macpost(\macplaceo,\mactranso)
= 1$) respectively. We use $\macipls{ \mactranso}$ ($\macopls{
\mactranso}$) to denote the set of input (output) places of a
transition $\mactranso$. In diagrams, places are shown as circles and
transitions as thick bars. Arcs are shown as directed edges between
places and transitions.

Given a Petri net $\macnet$, we associate with it an undirected
\concept{flow graph} $\macgraph(\macnet)=(\macplaces,E)$ where
$(\macplaceo_{1},\macplaceo_{2}) \in E$ iff for some transition
$\mactranso$,
$\macpre(\macplaceo_{1},\mactranso)+\macpost(\macplaceo_{1},\mactranso)
\ge 1$ and
$\macpre(\macplaceo_{2},\mactranso)+\macpost(\macplaceo_{2},
\mactranso) \ge 1$. If a place $\macplaceo$ is both an input and an
output place of some transition, the vertex corresponding to
$\macplaceo$ has a self loop in $\macgraph(\macnet)$.

A \name{marking} $\macmark:\macplaces\to \macNat$ 
can be thought of as a configuration
of the Petri net, with each place $\macplaceo$ having
$\macmark(\macplaceo)$ tokens. 
We will only deal with \concept{1-safe} Petri nets in this paper,
where the range of markings is restricted to $\{0,1\}$. 
Given a Petri net $\macnet$ with a
marking $\macmark$ and a transition $\mactranso$ such that for every
place $\macplaceo$, $\macmark(\macplaceo)\ge
\macpre(\macplaceo,\mactranso)$, the transition $\mactranso$ is said
to be \name{enabled} at $\macmark$ and can be
\name{fired} (denoted $\macmark\macStep{\mactranso}\macmark'$) 
giving
$\macmark'(\macplaceo)=\macmark(\macplaceo)-\macpre(\macplaceo,\mactranso)
+ \macpost(\macplaceo,\mactranso)$ for every place $\macplaceo$.  
This is generalized to a \name{firing sequence}
$\macmark\macStep{\mactranso_{1}}\macmark_{1}\macStep{\mactranso_{2}}
\cdots\macStep{\mactranso_{\macfslidx}} \macmark_{\macfslidx}$, more briefly
$\macmark\macStep{\mactranso_{1} \mactranso_2 \cdots \mactranso_{\macfslidx}}
\macmark_{\macfslidx}$.
A firing sequence
$\macfirseqo$ enabled at $\macmark_{0}$ is said to be
\name{maximal} if it is infinite, or if 
$\macmark_{0} \macStep{\macfirseqo} \macmark$ and no transition is
enabled at $\macmark$.

\begin{definition}[Reachability, coverability]
  \label{def:ReachCover}
  Given a $1$-safe Petri net $\macnet$ with initial marking
  $\macmark_{0}$ and a target marking $\macmark: \macplaces \to
  \{0,1\}$, the reachability problem is to decide if there is a firing
  sequence $\macfirseqo$ such that $\macmark_{0} \macStep{\macfirseqo}
  \macmark$. The coverability problem is to decide if there is a
  firing sequence $\macfirseqo$ and some marking $\macmark':\macplaces
  \to \{0,1\}$ such that $\macmark_{0} \macStep{\macfirseqo}
  \macmark'$ and $\macmark'( \macplace) \ge \macmark( \macplace)$ for
  every place $\macplace$.
\end{definition}

\subsection{Logics}
Linear Temporal Logic (LTL) is a formalism in which many properties of
transition systems can be specified \cite[section 4.1]{ESPCompl98}.
We use the syntax of \cite{ESPCompl98}, in particular the places
$\macplaces$ are the atomic formulae.
The LTL formulas are interpreted on \name{runs}, sequences of markings 
$\macmarks = \macmark_{0} \macmark_{1} \cdots$ from a firing sequence
of a $1$-safe Petri net. 
The satisfaction of a LTL formula $\macltlfo$ at
some position $\macseqidx$ in a run is defined inductively,
in particular
$\macmarks,\macseqidx \models \macplace$ iff
    $\macmark_{\macseqidx}(\macplace) = 1$.
Much more expressive is the Monadic Second Order
(MSO) logic of B\"uchi \cite{Buchi}, interpreted on a maximal run
$\macmark_{0} \macmark_{1} \cdots$, with $\pi,s \models \macplace(x)$
iff $\macmark_{s(x)}(\macplace) = 1$ under an assignment $s$ to the
variables.  Boolean operations, first-order and monadic second-order 
quantifiers are available as usual.

Computational Tree Logic (CTL) is another logic that can be used to specify
properties of $1$-safe Petri nets. The reader is referred to
\cite[section 4.2]{ESPCompl98} for details.  

\begin{definition}[Model checking]
  \label{def:MC}
  Given a $1$-safe Petri net $\macnet$ with initial marking
  $\macmark_{0}$ and a logical formula $\macltlfo$, the model checking
  problem (for that logic) is to decide if for every maximal firing
  sequence $\macfirseqo$, the corresponding maximal run $\macmarks$
  satisfies $\macmarks, 0 \models \macltlfo$.  
\end{definition}

Reachability, coverability and LTL model checking for $1$-safe Petri nets 
are all \macpspace{}-complete \cite{ESPCompl98}.  
Habermehl gave an automata-theoretic model checking procedure for
Linear Time $\mu$-calculus on general Petri nets \cite{HAB97}.

\subsection{Parameters}
The study of parameterized complexity derived an initial motivation
from the study of graph parameters. Many \macnp{}-complete problems
can be solved in polynomial time on trees and are \macfpt{} on graphs
that have tree-structured decompositions.

\begin{definition}[Tree decomposition, treewidth, pathwidth]
A \name{tree decomposition} of a graph $\macgraph =
(\macvertexs, \macedges)$ is a pair
$(\mactree,(\macbag_{\mactn})_{\mactn\in \macnodes(\mactree)})$,
where $\mactree$ is a tree and $(\macbag_{\mactn})_{\mactn\in
\macnodes(\mactree)}$ is a family of subsets of $\macvertexs$ such
that:
\begin{itemize}
  \item For all $\macverto\in \macvertexs$, the set $\{\mactn\in
    \macnodes(\mactree)\mid \macverto\in \macbag_{\mactn}\}$ is
    nonempty and connected in $\mactree$.
  \item For every edge $(\macverto_{1},\macverto_{2})\in \macedges$,
    there is a $\mactn\in \macnodes(\mactree)$ such that
    $\macverto_{1},\macverto_{2} \in \macbag_{\mactn}$.
\end{itemize}
The width of such a decomposition is the number
$\max\{|\macbag_{\mactn}|\mid \mactn\in \macnodes(\mactree)\}-1$. The
\concept{treewidth} $\mactw(\macgraph)$ of $\macgraph$ is the minimum of
the widths of all tree decompositions of $\macgraph$. 
If the tree $\mactree$ in the definition of tree decomposition is a
path, we get a path decomposition. The \concept{pathwidth}
$\text{pw}(\macgraph)$ of $\macgraph$ is the minimum of the widths of
all path decompositions of $\macgraph$. 
\end{definition}

From the definition, it is clear that pathwidth is at least as large as
treewidth and any problem that is \macwone{}-hard with pathwidth as
parameter is also \macwone{}-hard with treewidth as parameter. 
A fundamental result by Courcelle \cite{C90} shows that graphs of
small treewidth are easier to handle algorithmically: checking whether
a graph satisfies a MSO sentence is \macfpt{} if the graph's treewidth
and the MSO sentence's length are parameters. In our context, the
state space of a concurrent system can be considered a graph. However,
due to the state explosion problem, the state space can be very large.
Instead, we impose treewidth restriction on a compact representation
of the large state space --- a $1$-safe Petri net. Note also 
that we are not model checking the state space itself but only the
language of words generated by the Petri net. 

\begin{definition}[Vertex cover number]
A \name{vertex cover} $\macvertcov\subseteq \macvertexs$ of a graph
$\macgraph=(\macvertexs,\macedges)$ is a subset of vertices such that
for every edge in $E$, at least one of its vertices is in
$\macvertcov$. The \concept{vertex cover number} of $\macgraph$ is 
the size of a smallest vertex cover. 
\end{definition}

\begin{definition}[Benefit depth \cite{PL09}]
The set of
places $\macben(\macplace)$ \name{benefited} by a place $\macplace$ is
the smallest set of places (including $\macplace$) such that any
output place of any output transition of a place in
$\macben(\macplace)$ is also in $\macben(\macplace)$.  The
\concept{benefit depth} of a Petri net is defined as $\max_{\macplace
\in \macplaces}\{|\macben(\macplace)|\}$.  
\end{definition}

Benefit depth can be
thought of as a generalization of the out-degree in directed graphs.
For a Petri net, we take vertex covers of its flow
graph $\macgraph(\macnet)$. Any vertex cover of $\macgraph(\macnet)$
should include all vertices that have self loops.
It was shown in \cite{PL09,PraveenVC10} that benefit depth and vertex cover
number bring down the
complexity of coverability and boundedness in general Petri nets from
exponential space-complete \cite{RCK78} to \macparapsp{}.

\section{Lower bounds for $1$-safe Petri nets and pebbling}
\label{sec:twhardness}
\subsection{$1$-safe Petri nets, treewidth and pathwidth}
Here we prove \macwone{}-hardness of reachability in $1$-safe Petri
nets with the pathwidth of the flow graph as parameter, through a
parameterized reduction from the parameterized Partitioned Weighted
Satisfiability (\macppwcnfsat{}) problem.  The \name{primal graph} of
a propositional CNF formula has one vertex for each propositional
variable, and an edge between two variables iff they occur together in
a clause.  An instance of \macppwcnfsat{} problem is a triple
$(\macpcnff,\macpart:\macpv\to \{1,\dots,\macnumpart\},
\mactget:\{1,\dots,\macnumpart\}\to \macNat)$, where $\macpcnff$ is a
propositional CNF formula, $\macpart$ partitions the set of
propositional variables $\macpv$ into $\macnumpart$ parts and we need
to check if there is a satisfying assignment that sets exactly
$\mactget(\macpartidx)$ variables to $\top$ in each part
$\macpartidx$. Parameters are $\macnumpart$ and the pathwidth of the
primal graph of $\macpcnff$. We showed in an earlier paper that
\macppwcnfsat\/ is \macwone-hard when parameterized by the number of
parts $\macnumpart$ and the pathwidth of the primal graph \cite[Lemma
6.1]{PraveenMT10}.

Now we will demonstrate a parameterized reduction from \macppwcnfsat{}
to reachability in $1$-safe Petri nets, with the pathwidth of the flow
graph as parameter. Given an instance of \macppwcnfsat{}, let
$\macpvo_{1},\dots,\macpvo_{\macnumpv}$ be the variables used.
Construct an optimal path decomposition of the primal graph of the CNF
formula in the given \macppwcnfsat{} instance (doing this is
\macfpt{} \cite{BK96}). For every clause in the CNF formula, the
primal graph contains a clique formed by all variables occurring in
that clause. There will be at least one bag in the path decomposition
of the primal graph that contains all vertices in this clique
\cite[Lemma 6.49]{DF99}.  Order the bags of the path decomposition
from left to right and call the clause whose clique appears first
$\macclause_{1}$, the clause whose clique appears second as
$\macclause_{2}$ and so on.  If more than one such such clique appear
for the first time in the same bag, order the corresponding clauses
arbitrarily. Let $\macclause_{1},\dots,\macclause_{\macnumcl}$ be the
clauses ordered in this way. We will call this the \name{path
decomposition ordering} of clauses, and use it to prove that the
pathwidth of the flow graph of the constructed $1$-safe Petri net is
low (\lemref{lem:TgetCntNetPWLow}). For a partition $\macpartidx$
between $1$ and $\macnumpart$, we let $\macnumpv[\macpartidx]$ be the
number of variables in $\macpartidx$.  Following are the places of our
$1$-safe Petri net.
\begin{enumerate}
  \item For every propositional variable $\macpvo_{\macpvidx}$ used in
    the given \macppwcnfsat{} instance, places $\macpvo_{\macpvidx},
    \mactpl_{\macpvidx}, \macfpl_{\macpvidx}$.
  \item For every partition $\macpartidx$ between $1$ and $\macnumpart$,
    places $\macpartindo^{\macpartidx}, \macpartindt^{\macpartidx},
    \macpartctro_{\macpartidx}^{0},\dots,\macpartctro_{\macpartidx}^{\mactget(\macpartidx)}$
    and
    $\macpartctrt_{\macpartidx}^{0},\dots,\macpartctrt_{\macpartidx}^{\macnumpv[\macpartidx]-
    \mactget(\macpartidx)}$.
  \item For each clause $\macclause_{\macclidx}$, a place
    $\macclpl_{\macclidx}$. 
  Additional places $\macclause_{\macnumcl+1}$, $\macserpl$,
    $\macglpl$.
\end{enumerate}

The construction of the Petri net is illustrated in the following
diagrams. The notation $\macpart(\macpvidx)$ stands for the partition
to which $\macpvo_{\macpvidx}$ belongs. 
Intuitively, the truth assignment of $\macpvo_{\macpvidx}$ is
determined by firing $\macttr_{\macpvidx}$ or $\macftr_{\macpvidx}$ in
\figref{fig:PWVarAssigNet}.  The token in
$\mactpl_{\macpvidx}/\macfpl_{\macpvidx}$ is used to check
satisfaction of clauses later. The token in
$\macpartindo^{\macpart(\macpvidx)}/\macpartindt^{\macpart(\macpvidx)}$ is used to count the
number of variables set to $\top/\bot$ in each part, with the part of
the net in \figref{fig:PWTgCnt}. 
For each clause $\macclause_{\macclidx}$ between $1$ and $\macnumcl$,
the part of the net shown in \figref{fig:PWClSat} is constructed. In
\figref{fig:PWClSat}, it is assumed that $\macclause_{\macclidx} =
\macpvo_{1} \lor \overline{\macpvo_{2}} \lor \macpvo_{3}$. 
Intuitively, a token can be moved from place $\macclpl_{\macclidx}$ to
$\macclpl_{\macclidx+1}$ iff the clause $\macclause_{\macclidx}$ is
satisfied by the truth assignment determined by the firings of
$\macttr_{\macpvidx}/\macftr_{\macpvidx}$ for each $\macpvidx$
\begin{figure}[ht]
  \begin{center}
    \includegraphics{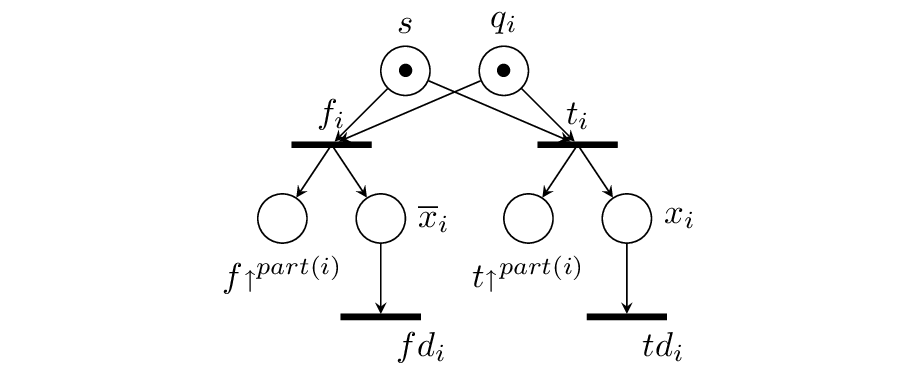}
  \end{center}
  \caption{Part of the net for each variable $\macpvo_{\macpvidx}$}
  \label{fig:PWVarAssigNet}
\end{figure}
\begin{figure}[!h]
  \begin{center}
    \includegraphics{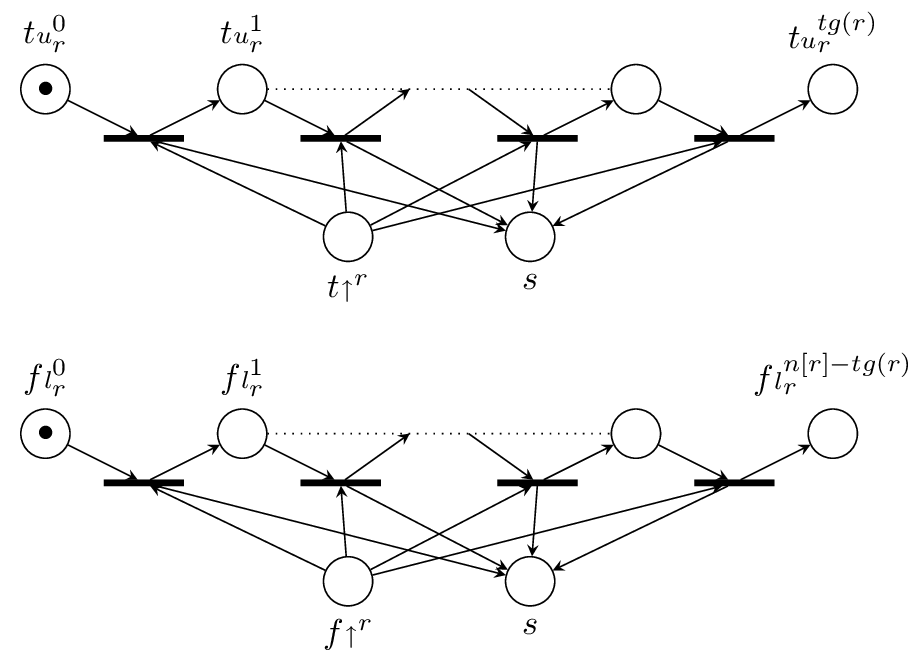}
  \end{center}
  \caption{Part of the net for each part $\macpartidx$ between $1$ and
  $\macnumpart$}
  \label{fig:PWTgCnt}
\end{figure}
between
$1$ and $\macnumpv$. The net in \figref{fig:PWGlCheck} checks that the
target has been met in all partitions.

The initial marking of the constructed net consists of $1$ token each
in the places $\macpvo_{1},\dots,\macpvo_{\macnumpv}, \macserpl,
\macpartctro_{1}^{0}, \dots, \macpartctro_{\macnumpart}^{0},
\macpartctrt_{1}^{0}, \dots, \macpartctrt_{\macnumpart}^{0}$ and
$\macclpl_{1}$, with $0$ tokens in all other places. The final marking
to be reached has a token in the places $\macserpl$ and $\macglpl$.

\begin{lemma}
  \label{lem:CNFSatPNReach}
  Given a \macppwcnfsat{} instance, constructing the Petri net
  described above  is \macfpt{}. The constructed Petri net is
  $1$-safe. The given instance of \macppwcnfsat{} is a \macyes{}
  instance iff in the constructed $1$-safe net, the required final
  marking can be reached from the given initial marking.
\end{lemma}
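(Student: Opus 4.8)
The plan is to prove the three assertions of \lemref{lem:CNFSatPNReach} in turn: (i) the construction is \macfpt{}, (ii) the constructed net is $1$-safe, and (iii) the reachability equivalence holds. Assertion (i) is essentially bookkeeping: the number of places and transitions is linear in the number of variables, clauses and parts of the \macppwcnfsat{} instance, and the only nontrivial computational step is building an optimal path decomposition of the primal graph, which is \macfpt{} by \cite{BK96} as already noted; everything else (ordering the clauses by first appearance of their cliques, laying out the gadgets of Figures~\ref{fig:PWVarAssigNet}--\ref{fig:PWGlCheck}) is polynomial-time. So the running time is $f(\macnumpart, \text{pw})\cdot\mathit{poly}(|\macpcnff|)$ as required.

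For assertion (ii), I would argue $1$-safeness structurally by identifying, for each gadget, an invariant that is preserved by every transition and holds in the initial marking. Concretely: for each variable $\macpvo_{\macpvidx}$, the places $\macpvo_{\macpvidx}, \mactpl_{\macpvidx}, \macfpl_{\macpvidx}$ together always hold a single token (firing $\macttr_{\macpvidx}$ moves it from $\macpvo_{\macpvidx}$ to $\mactpl_{\macpvidx}$ and feeds a token into $\macpartindo^{\macpart(\macpvidx)}$, similarly for $\macftr_{\macpvidx}$); for each part $\macpartidx$, the counter places $\macpartctro_{\macpartidx}^{0},\dots,\macpartctro_{\macpartidx}^{\mactget(\macpartidx)}$ together hold exactly one token (which only advances), and similarly for the $\macpartctrt$ chain, while $\macpartindo^{\macpartidx}$ and $\macpartindt^{\macpartidx}$ act as transient one-place buffers that are emptied by the counting transitions before the next variable in the part can deposit into them; the clause chain places $\macclpl_{1},\dots,\macclpl_{\macnumcl+1}$ carry a single token that only moves forward; and $\macserpl, \macglpl$ are each touched by at most one transition that consumes and does not re-produce. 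Since each transition's pre- and post-sets respect these invariants and the initial marking satisfies them, no place ever exceeds one token.

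Assertion (iii), the reachability equivalence, is the heart of the lemma and the place I expect the most care. For the forward direction, given a satisfying assignment meeting the targets, I would exhibit a firing sequence: first fire $\macttr_{\macpvidx}$ for each variable assigned $\top$ and $\macftr_{\macpvidx}$ for each assigned $\bot$, interleaving with the counting transitions of \figref{fig:PWTgCnt} so that the $\macpartctro_{\macpartidx}$ counter reaches index $\mactget(\macpartidx)$ and the $\macpartctrt_{\macpartidx}$ counter reaches index $\macnumpv[\macpartidx]-\mactget(\macpartidx)$; then, since each clause $\macclause_{\macclidx}$ is satisfied, fire the corresponding transition in \figref{fig:PWClSat} that moves the token from $\macclpl_{\macclidx}$ to $\macclpl_{\macclidx+1}$ using the token sitting in some $\mactpl_{\macpvidx}$ or $\macfpl_{\macpvidx}$ matching a satisfied literal; finally fire the transition of \figref{fig:PWGlCheck} that consumes the token in $\macclause_{\macnumcl+1}$ together with the full counters and produces the token in $\macglpl$, leaving the required final marking (a token in $\macserpl$ and $\macglpl$). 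For the converse, I would show that any firing sequence reaching the final marking induces such an assignment: reaching $\macglpl$ forces the global-check transition to have fired, which forces all part counters to be full, which by the $1$-safeness invariants forces exactly $\mactget(\macpartidx)$ of the $\macttr$ transitions in part $\macpartidx$ to have fired (and the complementary number of $\macftr$ transitions), so the induced assignment meets the targets; and reaching $\macclpl_{\macnumcl+1}$ forces each clause transition to have fired, each of which required a token in a $\mactpl_{\macpvidx}/\macfpl_{\macpvidx}$ place corresponding to a literal that the induced assignment makes true, so every clause is satisfied. The main subtlety is checking that the token-passing discipline in the clause gadgets genuinely couples "a token is available in $\mactpl_{\macpvidx}$" with "$\macpvo_{\macpvidx}$ was set to $\top$" and that consuming such a token for a clause check does not destroy it for later clauses — i.e., the clause-satisfaction transitions should test for the token (consume and restore, via $\macpre$ and $\macpost$ both set) rather than remove it, which the figures are designed to ensure; I would verify this against \figref{fig:PWClSat} and note that the role of $\macserpl$ is precisely to sequence the assignment phase before the clause-checking phase so that all $\mactpl_{\macpvidx}/\macfpl_{\macpvidx}$ tokens are in place before any clause transition fires.
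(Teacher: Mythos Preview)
Your overall approach mirrors the paper's proof: same \macfpt{} justification, same forward-direction firing strategy (assign, count, check clauses, fire the global check), and same backward-direction argument (the global-check transition forces full counters, which force the right number of $\macttr_{\macpvidx}/\macftr_{\macpvidx}$ firings, and the clause chain forces satisfaction). The paper treats $1$-safeness as ``not difficult to see''; your invariant-based account is more explicit and is a reasonable way to fill that in.

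There is, however, one genuine gap in your forward direction. After firing the global-check transition you declare the required final marking reached, but at that point every $\mactpl_{\macpvidx}$ or $\macfpl_{\macpvidx}$ still holds a token (as you yourself note, the clause transitions only \emph{test} these places). The required final marking has tokens \emph{only} in $\macserpl$ and $\macglpl$, so you are not done. The construction includes dedicated delete transitions $\mactdel_{\macpvidx}$ and $\macfdel_{\macpvidx}$ precisely for this cleanup; the paper's proof fires them as the last step. You need to add this step, and once you do, your $1$-safeness invariant ``$\macpvo_{\macpvidx},\mactpl_{\macpvidx},\macfpl_{\macpvidx}$ together always hold a single token'' must be weakened to ``at most one token'', since after $\mactdel_{\macpvidx}/\macfdel_{\macpvidx}$ all three are empty.

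A smaller inconsistency: you first say $\macserpl$ is ``touched by at most one transition that consumes and does not re-produce'', but later (correctly) identify $\macserpl$ as the sequencing semaphore. In fact $\macserpl$ is consumed by every $\macttr_{\macpvidx}/\macftr_{\macpvidx}$ and returned by the counting transitions; this is exactly what keeps $\macpartindo^{\macpartidx}$ and $\macpartindt^{\macpartidx}$ $1$-safe, so your $1$-safeness argument for those buffer places should invoke $\macserpl$ explicitly rather than treat the two observations separately.
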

\begin{fver}
\begin{proof}
  The only non-trivial process in the construction of the Petri net is
  computing an optimal path decomposition of the primal graph of the
  CNF formula in the given \macppwcnfsat{} instance. Doing this is
  \macfpt{} and the rest of the construction can be done in polynomial
  time. It is not difficult to see that the constructed net is
  $1$-safe.

\begin{figure}[ht]
  \begin{center}
    \includegraphics{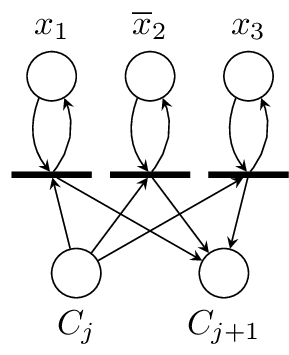}
  \end{center}
  \caption{Part of the net for each clause $\macclause_{\macclidx}$}
  \label{fig:PWClSat}
\end{figure}
\begin{figure}[h]
  \begin{center}
    \includegraphics{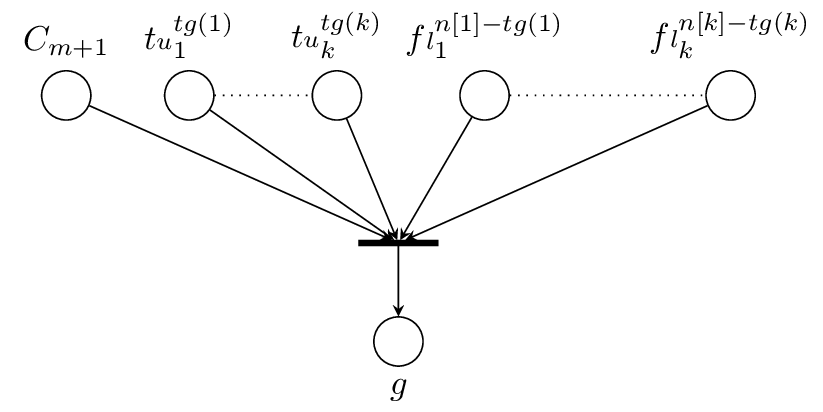}
  \end{center}
  \caption{Part of the net to check that target has been met}
  \label{fig:PWGlCheck}
\end{figure}

  Suppose the given instance of \macppwcnfsat{} is a \macyes{}
  instance. Starting with $\macpvidx=1$, repeat the following firing
  sequence for each $\macpvidx$ between $1$ and $\macnumpv$. If
  $\macpvo_{\macpvidx}$ is $\top$ in the witnessing satisfying
  assignment, fire $\macttr_{\macpvidx}$ else fire
  $\macftr_{\macpvidx}$. Then use the token thus put into
  $\macpartindo^{\macpart(\macpvidx)}/
  \macpartindt^{\macpart(\macpvidx)}$ respectively to shift a token
  one place to the right in \figref{fig:PWTgCnt} and put a token back
  in the place $\macserpl$. Continue with the next $\macpvidx$. Since
  the witnessing assignment meets the target in each
  partition, we will have one token each in the places
  $\macpartctro_{1}^{\mactget(1)}, \dots,
  \macpartctro_{\macnumpart}^{\mactget(\macnumpart)},
  \macpartctrt_{1}^{\macnumpv[1]-\mactget(1)}, \dots,
  \macpartctrt_{\macnumpart}^{\macnumpv[\macnumpart]-\mactget(\macnumpart)}$.
  In addition, there will be a token in $\mactpl_{\macpvidx} /
  \macfpl_{\macpvidx}$ iff the witnessing assignment set
  $\macpvo_{\macpvidx}$ to $\top / \bot$ respectively. Since this
  witnessing assignment satisfies all the clauses of the CNF formula,
  we can move the initial token in $\macclpl_{1}$ to
  $\macclpl_{\macnumcl+1}$ using the transitions in
  \figref{fig:PWClSat}. Now, the transition in \figref{fig:PWGlCheck}
  can be fired to get a token into the place $\macglpl$. Now, the only
  tokens left are those in the places $\macserpl$ and $\macglpl$, and
  those in $\mactpl_{\macpvidx} / \macfpl_{\macpvidx}$. We can remove
  the tokens in $\mactpl_{\macpvidx} / \macfpl_{\macpvidx}$ by firing
  $\mactdel_{\macpvidx} / \macfdel_{\macpvidx}$ to reach the 
  final marking.

  Suppose the required final marking is reachable in the constructed
  Petri net. Since a token has to be added to the place $\macglpl$ to
  reach the final marking and the transition in \figref{fig:PWGlCheck}
  is the only transition that can add tokens to $\macglpl$, all input
  places of that transition must receive a token. The only way to get
  a token in places
  $\macpartctro_{\macpartidx}^{\mactget(\macpartidx)}$ is to shift the
  initial token in the place $\macpartctro_{\macpartidx}^{0}$
  $\mactget(\macpartidx)$ times. This requires exactly
  $\mactget(\macpartidx)$ tokens in the place
  $\macpartindo^{\macpartidx}$. A similar argument holds for getting a
  token in
  $\macpartctrt_{\macpartidx}^{\macnumpv[\macpartidx]-\mactget
  (\macpartidx)}$. Since the only way to add a token to
  $\macpartindo^{\macpartidx} / \macpartindt^{\macpartidx}$ is to fire
  transitions $\macttr_{\macpvidx} / \macftr_{\macpvidx}$ (such that
  $\macpart(\macpvidx) = \macpartidx$), the only
  way to get a token each in $\macpartctro_{1}^{\mactget(1)}, \dots,
  \macpartctro_{\macnumpart}^{\mactget(\macnumpart)},
  \macpartctrt_{1}^{\macnumpv[1]-\mactget(1)}, \dots,
  \macpartctrt_{\macnumpart}^{\macnumpv[\macnumpart]-\mactget(\macnumpart)}$
  is to fire either $\macttr_{\macpvidx}$ or $\macftr_{\macpvidx}$
  for each $\macpvidx$ between $1$ and $\macnumpv$. Consider
  any firing sequence reaching the required final marking. Consider
  the truth assignment to $\macpvo_{1}, \dots, \macpvo_{\macnumpv}$
  that assigns $\top$ to exactly those variables $\macpvo_{\macpvidx}$
  such that $\macttr_{\macpvidx}$ was fired in the firing sequence.
  This truth assignment meets the target for each part since this firing
  sequence adds one token each to the places
  $\macpartctro_{1}^{\mactget(1)}, \dots,
  \macpartctro_{\macnumpart}^{\mactget(\macnumpart)},
  \macpartctrt_{1}^{\macnumpv[1]-\mactget(1)}, \dots,
  \macpartctrt_{\macnumpart}^{\macnumpv[\macnumpart]-\mactget(\macnumpart)}$.
  To reach the final marking, a token is also required at the place
  $\macclpl_{\macnumcl+1}$. The only way to get this token is to shift
  the initial token in $\macclpl_{1}$ to $\macclpl_{\macnumcl+1}$
  through the transitions in \figref{fig:PWClSat}. This means that
  every clause is satisfied by the truth assignment we constructed.
\qed
\end{proof}
\end{fver}

It remains to prove that the pathwidth of the flow graph of
the constructed $1$-safe net is a function of the parameters of the
\macppwcnfsat{} instance.
\begin{lemma}
  \label{lem:TgetCntNetPWLow}
  Suppose a given instance of \macppwcnfsat{} has a CNF formula whose
  primal graph has pathwidth $\macpw$ and $\macnumpart$ parts.  Then,
  the flow graph of the $1$-safe net constructed as described above
  has pathwidth at most $3\macpw+ 4\macnumpart+ 7$.
\end{lemma}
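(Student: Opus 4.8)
The plan is to transform an optimal path decomposition $(\macbag_{1},\dots,\macbag_{N})$ of the primal graph of $\macpcnff$ — so $|\macbag_{\mactn}|\le\macpw+1$ for every $\mactn$ — taken in the left-to-right order used to define the path decomposition ordering of clauses, into a path decomposition of the flow graph $\macgraph(\macnet)$ of the stated width. Reading the adjacencies off Figures~\ref{fig:PWVarAssigNet}--\ref{fig:PWGlCheck}, the guiding observation is that, apart from a bounded set of \emph{global} places, the three gadget families are almost disjoint: $\macserpl$ is adjacent only to the variable places $\macpvo_{\macpvidx}$, the indicator places $\macpartindo^{\macpartidx},\macpartindt^{\macpartidx}$ and the counter places; each $\macpartindo^{\macpartidx}$ (resp.\ $\macpartindt^{\macpartidx}$) is adjacent only to $\macserpl$, to the $\macpvo_{\macpvidx},\mactpl_{\macpvidx}$ (resp.\ $\macpvo_{\macpvidx},\macfpl_{\macpvidx}$) with $\macpart(\macpvidx)=\macpartidx$, and to the counter places of part $\macpartidx$; each $\macclpl_{\macclidx}$ is adjacent only to $\macclpl_{\macclidx-1},\macclpl_{\macclidx+1}$ and to the literal places of the literals occurring in $\macclause_{\macclidx-1}$ and $\macclause_{\macclidx}$; and $\macglpl$ is adjacent only to $\macclpl_{\macnumcl+1}$ and the ``final'' counters $\macpartctro_{\macpartidx}^{\mactget(\macpartidx)},\macpartctrt_{\macpartidx}^{\macnumpv[\macpartidx]-\mactget(\macpartidx)}$. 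I would therefore build the new decomposition in three consecutive stages: the variable and clause places first, then the counter chains, then the target-check transition.

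In Stage~1 I would declare the $2\macnumpart+1$ places $\macserpl,\macpartindo^{1},\macpartindt^{1},\dots,\macpartindo^{\macnumpart},\macpartindt^{\macnumpart}$ \emph{global} and replace each $\macbag_{\mactn}$ by $\macbag'_{\mactn}:=\macbag_{\mactn}\cup\{\mactpl_{\macpvidx},\macfpl_{\macpvidx}\mid\macpvo_{\macpvidx}\in\macbag_{\mactn}\}$ together with all global places, so $|\macbag'_{\mactn}|\le 3(\macpw+1)+2\macnumpart+1$ and every edge with both endpoints among variable, indicator and serial places is already covered. To bring in the clause places I would use the path decomposition ordering: for each $\macclidx$ let $\mactn_{\macclidx}$ be the first bag containing all variables of $\macclause_{\macclidx}$ (it exists by \cite[Lemma~6.49]{DF99}, and the ordering gives $\mactn_{1}\le\cdots\le\mactn_{\macnumcl}$); then $\macbag'_{\mactn_{\macclidx}}$ already contains every literal place of $\macclause_{\macclidx}$. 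Immediately after $\macbag'_{\mactn_{\macclidx}}$, in increasing order of $\macclidx$, I would insert one further bag equal to $\macbag'_{\mactn_{\macclidx}}\cup\{\macclpl_{\macclidx},\macclpl_{\macclidx+1}\}$ (covering the edges of the gadget of \figref{fig:PWClSat} for $\macclause_{\macclidx}$), and keep $\macclpl_{\macclidx}$ present in every bag between the bag first introducing it and this inserted bag. A short argument from $\mactn_{1}\le\cdots\le\mactn_{\macnumcl}$ shows that each original bag then holds at most one clause place and each inserted bag exactly two, so the Stage~1 bags have size at most $3(\macpw+1)+2\macnumpart+1+2=3\macpw+2\macnumpart+6$.

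In Stage~2 I would append, one part $\macpartidx=1,\dots,\macnumpart$ at a time, a path of bags $\{\macpartctro_{\macpartidx}^{a},\macpartctro_{\macpartidx}^{a+1}\}$ climbing the $\macpartctro_{\macpartidx}$-chain and then $\{\macpartctrt_{\macpartidx}^{b},\macpartctrt_{\macpartidx}^{b+1}\}$ climbing the $\macpartctrt_{\macpartidx}$-chain, keeping in every bag of this block the place $\macserpl$, the not-yet-handled indicators $\macpartindo^{\macpartidx},\dots,\macpartindo^{\macnumpart},\macpartindt^{\macpartidx},\dots,\macpartindt^{\macnumpart}$, the already-finished counters $\macpartctro_{\macpartidx'}^{\mactget(\macpartidx')},\macpartctrt_{\macpartidx'}^{\macnumpv[\macpartidx']-\mactget(\macpartidx')}$ for $\macpartidx'<\macpartidx$, and the place $\macclpl_{\macnumcl+1}$. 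This covers all edges of the gadgets of \figref{fig:PWTgCnt} (note $\macpartindo^{\macpartidx}$ and $\macpartindt^{\macpartidx}$ are still present during part $\macpartidx$'s block), and each such bag has size $O(\macnumpart)$, well below the target. In Stage~3 I would append a single bag holding $\macpartctro_{1}^{\mactget(1)},\dots,\macpartctro_{\macnumpart}^{\mactget(\macnumpart)},\macpartctrt_{1}^{\macnumpv[1]-\mactget(1)},\dots,\macpartctrt_{\macnumpart}^{\macnumpv[\macnumpart]-\mactget(\macnumpart)},\macclpl_{\macnumcl+1},\macglpl$, of size $2\macnumpart+2$, which covers the clique on the input places of the transition of \figref{fig:PWGlCheck} together with its edges to $\macglpl$.

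It would then remain to check that the result is a path decomposition: every self-loop-free edge of $\macgraph(\macnet)$ is covered by one of the three stages, and the set of bags containing any given place is an interval — immediate for the purely local variable, literal, counter and indicator parts, and guaranteed for the global places, the finished counters and $\macclpl_{\macnumcl+1}$ precisely because Stages~2--3 were arranged to carry them along without gaps. The widest bags are those of Stage~1, of size at most $3\macpw+2\macnumpart+6$, so the width is at most $3\macpw+2\macnumpart+5\le 3\macpw+4\macnumpart+7$, as claimed (the slack in the stated bound leaves ample room for the exact accounting in Stages~2--3). I expect the only genuinely delicate point to be the clause-place bookkeeping in Stage~1 — showing that the insertion keeps a bounded number of clause places in every bag while preserving the interval property of each $\macclpl_{\macclidx}$ — and that is exactly where the weak monotonicity $\mactn_{1}\le\cdots\le\mactn_{\macnumcl}$ supplied by the path decomposition ordering of clauses is indispensable; the counter and target-check stages are routine once one has fixed which places to keep alive through them.
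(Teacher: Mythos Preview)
Your approach is essentially the paper's: triple the bags of an optimal path decomposition of the primal graph, add a fixed global set to every bag, thread the clause places $\macclpl_{\macclidx}$ through using the weak monotonicity $\mactn_{1}\le\cdots\le\mactn_{\macnumcl}$ supplied by the path decomposition ordering, and then append chain bags for the counter places. The only difference is that the paper takes the larger global set $\macpartinds=\{\macserpl,\macglpl,\macclpl_{\macnumcl+1}\}\cup\{\macpartindo^{\macpartidx},\macpartindt^{\macpartidx},\macpartctro_{\macpartidx}^{\mactget(\macpartidx)},\macpartctrt_{\macpartidx}^{\macnumpv[\macpartidx]-\mactget(\macpartidx)}:1\le\macpartidx\le\macnumpart\}$ of size $4\macnumpart+3$ --- which is exactly where the $4\macnumpart$ in the stated bound comes from --- so that each appended counter-chain bag is simply two consecutive chain places together with $\macpartinds$, whereas your smaller global set gives a tighter width but forces the explicit carrying of $\macclpl_{\macnumcl+1}$, $\macglpl$, and the already-finished final counters through Stages~2--3 that you describe.
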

\begin{fver}
\begin{proof}
  We show a path decomposition of the flow
  graph of the net. Call the set of places
  $\{\macserpl, \macglpl, \macclpl_{\macnumcl+1},
  \macpartindo^{1}, \dots, \macpartindo^{\macnumpart},
  \macpartindt^{1}, \dots, \macpartindt^{\macnumpart},
  \macpartctro_{1}^{\mactget(1)},\dots,
  \macpartctro_{\macnumpart}^{\mactget(\macnumpart)},
  \macpartctrt_{1}^{\macnumpv[1]-\mactget(1)}, \dots,\\
  \macpartctrt_{\macnumpart}^{\macnumpv[\macnumpart]-
  \mactget(\macnumpart)}\}$ as $\macpartinds$.  
  Consider an optimal path decomposition of the primal graph of the
  CNF formula. In every bag, replace every occurrence of each
  $\macpvo_{\macpvidx}$ by the set $\{\macpvo_{\macpvidx},
  \mactpl_{\macpvidx}, \macfpl_{\macpvidx}\} \cup \macpartinds$.

  Let $\macclause_{1},\dots, \macclause_{\macnumcl}$ be the clauses in
  the path decomposition order as explained in the beginning of this
  sub-section. We will first show that places representing clauses can
  be added to the bags of the above decomposition without increasing
  their size much, while maintaining the invariant that all bags
  containing any one place are connected in the decomposition. We will
  do this by \name{augmenting} existing bags with new elements: if
  $\macbag$ is any bag in the decomposition and $\macplace$ is an
  element not in $\macbag$, augmenting $\macbag$ with $\macplace$
  means creating a new bag $\macbag'$ immediately to the left of
  $\macbag$ containing $\macplace$ in addition to the elements in
  $\macbag$. We will call the new bag $\macbag'$ thus created an
  augmented bag. Perform the following operation in increasing order
  for each $\macclidx$ between $1$ and $\macnumcl$: if $\macbag$ is
  the first non-augmented bag from left to contain all literals of the
  clause $\macclause_{ \macclidx}$, augment $\macbag$ with $\macclpl_{
  \macclidx}$.

  There will be $\macnumcl$ new bags created due to the above
  augmentation steps. Due to the path decomposition ordering of
  $\macclause_{1}, \dots, \macclause_{ \macnumcl}$, the augmented bag
  containing $\macclpl_{ \macclidx +1}$ occurs to the right of the
  augmented bag containing $\macclpl_{ \macclidx}$ for each
  $\macclidx$, $1 \le \macclidx < \macnumcl$. There might be some
  non-augmented bags between the augmented bags containing
  $\macclpl_{\macclidx}$ and $\macclpl_{ \macclidx + 1}$. If so, add
  $\macclpl_{ \macclidx}$ to such non-augmented bags.  Now, to every
  bag, if it contains $\macclpl_{\macclidx}$ for some $\macclidx$
  between $1$ and $\macnumcl$, add $\macclpl_{\macclidx+1}$. It is
  routine to verify the following properties of the sequence we have
  with the bags modified as above.
  \begin{itemize}
    \item Each bag has at most $3\macpw+ 4\macnumpart+ 8$ elements.
    \item The set of bags containing any one element forms a
      contiguous sub-sequence.
    \item Every vertex and edge in any subgraph induced by the parts
      of the net in \figref{fig:PWVarAssigNet}, \figref{fig:PWClSat},
      and \figref{fig:PWGlCheck} is contained in some bag.
  \end{itemize}
  To account for the subgraph induced by the parts of the net in
  \figref{fig:PWTgCnt}, we append the following sequence of bags for each
  $\macpartidx$ between $1$ and $\macnumpart$:
  \begin{align*}
    &(\{\macpartctro_{\macpartidx}^{0},
    \macpartctro_{\macpartidx}^{1}\} \cup \macpartinds) -
    (\{\macpartctro_{\macpartidx}^{1},
    \macpartctro_{\macpartidx}^{2}\} \cup \macpartinds) - \cdots -
    (\{\macpartctro_{\macpartidx}^{\mactget(\macpartidx)-1},
    \macpartctro_{\macpartidx}^{\mactget(\macpartidx)}\} \cup
    \macpartinds) - \\
    &(\{\macpartctrt_{\macpartidx}^{0},
    \macpartctrt_{\macpartidx}^{1}\} \cup \macpartinds) -
    (\{\macpartctrt_{\macpartidx}^{1},
    \macpartctrt_{\macpartidx}^{2}\} \cup \macpartinds) - \cdots -
    (\{\macpartctrt_{\macpartidx}^{\macnumpv[\macpartidx]-\mactget(\macpartidx)-1},
    \macpartctrt_{\macpartidx}^{\macnumpv[\macpartidx]-\mactget(\macpartidx)}\} \cup
    \macpartinds)
  \end{align*}
  The resulting sequence of bags is a path decomposition of the
  flow graph of the Petri net, whose width is at most $3\macpw+
  4\macnumpart+ 7$.
\qed
\end{proof}
\end{fver}
In the above reduction, it is enough to check if in the constructed
$1$-safe net, we can reach a marking that has a token at the place
$\macglpl$. This can be expressed as reachability, coverability etc.
Hence we get:

\begin{theorem}
  \label{thm:PNReachPWHard}
  With the pathwidth (and hence treewidth also) of the flow
  graph of a $1$-safe Petri net as parameter, reachability,
  coverability, CTL model checking and the complement of LTL/MSO model
  checking (with formulas of constant size) are \macwone{}-hard.
\end{theorem}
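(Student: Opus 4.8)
The plan is to combine \lemref{lem:CNFSatPNReach} and \lemref{lem:TgetCntNetPWLow} to get the reachability result directly, and then observe that all the other problems in the statement are at least as hard via trivial adaptations of the same construction. First, by \lemref{lem:CNFSatPNReach}, the map taking a \macppwcnfsat{} instance to the constructed $1$-safe net (with its initial and target markings) is an \macfpt{}-time reduction, and it is a correct many-one reduction for reachability. By \lemref{lem:TgetCntNetPWLow}, the pathwidth of the flow graph of the constructed net is bounded by $3\macpw+4\macnumpart+7$, which is a computable function of the two parameters $\macpw$ and $\macnumpart$ of the source instance. Hence this is a \emph{parameterized} reduction from \macppwcnfsat{} (parameterized by $\macnumpart$ and primal-graph pathwidth) to reachability in $1$-safe Petri nets parameterized by flow-graph pathwidth. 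Since \macppwcnfsat{} with these parameters is \macwone-hard by \cite[Lemma 6.1]{PraveenMT10}, reachability with pathwidth as parameter is \macwone-hard; and since pathwidth dominates treewidth, the same holds with treewidth as parameter.

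Next I would handle the remaining problems. As the remark preceding the theorem notes, correctness of the construction really only needs the weaker statement that \emph{some} marking putting a token on $\macglpl$ is reachable: the forward direction of \lemref{lem:CNFSatPNReach} already reaches such a marking, and the backward direction argues from the mere presence of a token on $\macglpl$. So for \emph{coverability}, take the target marking to be the single place $\macglpl$ with one token; the same two directions of the proof go through verbatim, giving \macwone-hardness of coverability. For \emph{CTL model checking}, the property ``$\macglpl$ is never marked'' is expressed by the constant-size formula $\mathrm{AG}\,\neg \macglpl$ (using $\macglpl$ as an atomic proposition), which holds on the net iff no reachable marking covers $\macglpl$ iff the \macppwcnfsat{} instance is a \macyes{} instance's negation; so CTL model checking is \macwone-hard (the complementation is harmless, as \macwone{} is not known to be closed under complement but hardness transfers through the reduction unchanged by swapping \macyes/\macno). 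For \emph{LTL} and \emph{MSO}, the formula $\mathrm{G}\,\neg \macglpl$ (respectively the MSO sentence $\forall x\,\neg\macglpl(x)$) is false on some maximal run iff $\macglpl$ becomes marked on some run; to make this an ``every maximal run'' statement as in \defref{def:MC} one may either phrase the result for the \emph{complement} of LTL/MSO model checking — which is exactly what the statement says — or add a trivial gadget so that every maximal firing sequence passes through the relevant choices; the formulas remain of constant size.

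The only genuinely delicate point is ensuring that all of reachability, coverability, and the temporal/MSO model-checking problems can be driven by the \emph{same} net, so that the pathwidth bound of \lemref{lem:TgetCntNetPWLow} is reused without change. This is why I would be careful to use only atomic observations already present in the net (the place $\macglpl$) and to avoid adding new places that would inflate the vertex set of the flow graph; any auxiliary gadget needed to force ``every maximal run'' behaviour must add only $\Oh(1)$ places so that the pathwidth stays within $3\macpw+4\macnumpart+O(1)$, still a function of the parameters. Once that bookkeeping is in place, the theorem follows by assembling \lemref{lem:CNFSatPNReach}, \lemref{lem:TgetCntNetPWLow}, and the constant-size formula translations above; I expect the main obstacle to be purely this uniformity argument rather than any new combinatorial content.
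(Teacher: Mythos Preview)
Your approach matches the paper's: both derive the theorem from \lemref{lem:CNFSatPNReach} and \lemref{lem:TgetCntNetPWLow} together with the observation that it suffices to test whether \emph{some} reachable marking puts a token on $\macglpl$, and that this test can be phrased in each of the formalisms listed. The paper's own argument is in fact a single sentence to exactly this effect, so your expansion is faithful.

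There is one genuine slip in your CTL case. The formula $\mathrm{AG}\,\neg\macglpl$ holds iff the \macppwcnfsat{} instance is a \emph{no} instance, so what you have written is a parameterized reduction from the \emph{complement} of \macppwcnfsat{} to CTL model checking. Your parenthetical that ``hardness transfers through the reduction unchanged by swapping \macyes/\macno'' is not correct: swapping answers yields co-\macwone{}-hardness, and you yourself note that \macwone{} is not known to be closed under complement, which is exactly why the swap is \emph{not} harmless. The repair is immediate, since CTL is closed under negation: use $\mathrm{EF}\,\macglpl$ (equivalently $\lnot\mathrm{AG}\,\neg\macglpl$), which holds at the initial marking iff some reachable marking has a token on $\macglpl$, i.e.\ iff the \macppwcnfsat{} instance is \macyes. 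No auxiliary gadget is needed, so the pathwidth bound of \lemref{lem:TgetCntNetPWLow} applies unchanged.
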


\subsection{Graph pebbling problems, treewidth and pathwidth}
The techniques used in the above lower bound proof can be easily
translated to some graph pebbling problems \cite{DFS99}. 
As conjectured in \cite[section 5]{DFS99}, we prove that
\macsdp{} I, parameterized by treewidth is \macwone{}-hard. 
An instance of this problem has a bipartite digraph
$\macdgraph=(\macvertexs, \macarcs)$ for which the vertex set
$\macvertexs$ is partitioned $\macvertexs = \macred \cup \macblue$,
and also the arc set $\macarcs$ is
partitioned into two partitions $\macarcs = \macarcs^{+} \cup
\macarcs^{-}$.  The problem is to reach the \emph{finish state} where
there are pebbles on all the red vertices, starting from a \emph{start
state} where there are no pebbles on any of the red vertices, by a
series of moves of the following form:
\begin{itemize}
\item
  If $\macbvertex$ is a blue vertex such that for
  all $\macvertext$ such that $(\macvertext, \macbvertex) \in
  \macarcs^{+}$, $\macvertext$ is pebbled, and for all $\macvertext$
  such that $(\macvertext, \macbvertex) \in \macarcs^{-}$,
  $\macvertext$ is not pebbled (in which case we say that
  $\macbvertex$ is \emph{enabled}), then the set of vertices
  $\macvertext$ such that $(\macbvertex, \macvertext) \in
  \macarcs^{+}$ are reset by making them all pebbled, and the set of
  all vertices $\macvertext$ such that $(\macbvertex, \macvertext) \in
  \macarcs^{-}$ are reset by making them all unpebbled.
\end{itemize}

\begin{corollary}
  \label{cor:GraphPebblingHard}
  Parameterized by pathwidth (and hence by treewidth also), \macsdp{}
  is \macwone{}-hard.
\end{corollary}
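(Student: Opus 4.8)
The plan is to turn the $1$-safe net $\macnet$ built in \Lemref{lem:CNFSatPNReach}, whose flow graph has small pathwidth by \Lemref{lem:TgetCntNetPWLow}, into an \macsdp{} instance, reusing the observation made just before \Thmref{thm:PNReachPWHard} that it is enough to detect a reachable marking carrying a token on $\macglpl$. The guiding correspondence is that a move on a blue vertex $\macbvertex$ is exactly the firing of a Petri net transition: $\macbvertex$ being \emph{enabled} corresponds to all input places being marked, and the reset of $\macbvertex$'s successors corresponds to consuming tokens from input places and producing tokens on output places.

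Concretely, I would build a visibly bipartite digraph $\macdgraph$ whose red vertices are the places of $\macnet$ together with two fresh latch vertices $L$ and $F$, and whose blue vertices are the transitions of $\macnet$ together with two fresh rules $\macbvertex_{\mathrm{in}}$ and $\macbvertex_{\mathrm{out}}$. For a transition $\mactrans$ and a place $\macplace$ of $\macnet$: if $\macplace$ is an input but not an output of $\mactrans$, add a positive arc $\macplace\to\macbvertex_{\mactrans}$ and a negative arc $\macbvertex_{\mactrans}\to\macplace$; if $\macplace$ is an output but not an input, add a positive arc $\macbvertex_{\mactrans}\to\macplace$; and if $\macplace$ is both (a read, as occurs in the clause gadget of \figref{fig:PWClSat}), add a positive arc $\macplace\to\macbvertex_{\mactrans}$ and a positive arc $\macbvertex_{\mactrans}\to\macplace$. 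No red vertex ever receives both a positive and a negative arc from the same blue vertex, so the signs are unambiguous. The rule $\macbvertex_{\mathrm{in}}$ has a negative arc from $L$ and positive arcs to $L$ and to every place initially marked in $\macnet$; the rule $\macbvertex_{\mathrm{out}}$ has a positive arc from $\macglpl$, a negative arc from $F$, and positive arcs to $F$ and to every red vertex. The \macsdp{} start state (no red vertex pebbled) matches the ``pre-initial'' configuration, and all of this is clearly computable in polynomial time from $\macnet$.

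For correctness I would check three points. (i) Every transition of $\macnet$ has a nonempty set of input places, so every transition rule has a positive predecessor, whereas $\macbvertex_{\mathrm{in}}$ has none; hence from the start state the only enabled move is $\macbvertex_{\mathrm{in}}$, it must fire before any transition rule, and the latch $L$ makes it fire at most once, after which the red vertices carry precisely the initial marking $\macmark_0$ of $\macnet$. (ii) As long as $\macbvertex_{\mathrm{out}}$ has not fired, the transition rules simulate $\macnet$ step for step, so the red-configurations reachable after $\macbvertex_{\mathrm{in}}$ are exactly the markings reachable in $\macnet$; the delicate part is that ``pebble the output places'' is idempotent, which agrees with ``add one token'' precisely because $\macnet$ is $1$-safe, so at every reachable marking the output places of an enabled transition are empty. (iii) $\macbvertex_{\mathrm{out}}$ is the only rule that pebbles $F$, and it is enabled exactly when $\macglpl$ is pebbled; firing it pebbles every red vertex, i.e.\ reaches the finish state. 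Putting these together, the finish state is reachable iff some marking of $\macnet$ puts a token on $\macglpl$, which by \Lemref{lem:CNFSatPNReach} happens iff the \macppwcnfsat{} instance is a \macyes{} instance.

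For the parameter I would convert the path decomposition of $\macgraph(\macnet)$ of width at most $3\macpw+4\macnumpart+7$ from \Lemref{lem:TgetCntNetPWLow}: for each transition $\mactrans$, the places incident to $\mactrans$ form a clique of $\macgraph(\macnet)$, hence occur together in some bag, and inserting a copy of that bag augmented with $\macbvertex_{\mactrans}$ raises the width by only $1$; the low-degree vertices $L,F$ are inserted next to their neighbours, and $\macbvertex_{\mathrm{in}},\macbvertex_{\mathrm{out}}$ --- which touch unboundedly many red vertices --- are added to every bag in universal-vertex fashion, at a further additive $O(1)$ cost. So $\macdgraph$ has pathwidth $O(\macpw+\macnumpart)$, a function of the parameters of the \macppwcnfsat{} instance, the whole map is a parameterized reduction, and \macwone{}-hardness for pathwidth, hence for treewidth, follows. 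I expect the main obstacle to be point (ii): verifying that the idempotent, monotone-within-a-single-move semantics of \macsdp{} never produces a red-configuration that is not a genuine reachable marking of $\macnet$ --- this is exactly where $1$-safeness is used, and where a careless treatment of read arcs would break the equivalence.
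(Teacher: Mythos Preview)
Your proposal is correct and follows essentially the same approach as the paper: reduce \macppwcnfsat{} to reachability in the $1$-safe net of \Lemref{lem:CNFSatPNReach}, then encode that net as an \macsdp{} instance with places as red vertices, transitions as blue vertices, and two special blue vertices for initialisation and for flooding after $\macglpl$ is pebbled, with the pathwidth bound obtained from the clique property of $\macipls{\mactrans}\cup\macopls{\mactrans}$ in $\macgraph(\macnet)$. The only cosmetic difference is that the paper makes its initialisation vertex one-shot by giving it $\macarcs^{-}$ arcs from \emph{all} red vertices (using that the specific net never reaches the all-zero marking) rather than via your latch $L$, and it omits the latch $F$ and the redundant $\macarcs^{+}$ arc on read places; your variant is slightly cleaner but not materially different.
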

\begin{cver}
  The proof is by a parameterized reduction from \macppwcnfsat{} to
  reachability in $1$-safe Petri nets as in the last sub-section, and
  another reduction from reachability in $1$-safe Petri nets to
  \macsdp{}.
\end{cver}
\begin{fver}
  \begin{proof}
    To reduce \macppwcnfsat{} to \macsdp{}, we first reduce the given
  \macppwcnfsat{} instance to a $1$-safe net as shown in 
  \lemref{lem:CNFSatPNReach}.
  From this $1$-safe net, construct an instance of \macsdp{} as follows.
  Let the set of all places form the set of vertices $\macred$ and
  the set of all transitions form the set of vertices $\macblue$.
  The arcs of the \macsdp{} instance are as follows.
  \begin{enumerate}
    \item If $\macpre(\macplace,\mactrans) = 1$ in the $1$-safe net,
      draw an $\macarcs^{+}$ arc from $\macplace$ to $\mactrans$ in
      the \macsdp{} instance.
    \item If $\macpre(\macplace,\mactrans) =1$ and
      $\macpost(\macplace, \mactrans) =0$, draw an
      $\macarcs^{-}$ arc from $\mactrans$ to $\macplace$.
    \item If $\macpre(\macplace,\mactrans) = 0$ and
      $\macpost(\macplace, \mactrans) =1$, draw an
      $\macarcs^{+}$ arc from $\mactrans$ to $\macplace$.
  \end{enumerate}
  Suppose that in the $1$-safe net,
  $\macmark_{1} \macStep{\mactrans} \macmark_{2}$. It is clear that
  the constructed \macsdp{} instance in the state where precisely
  those red vertices are pebbled that have a token in $\macmark_{1}$
  enables the blue vertex $\mactrans$, and can move to the state where
  precisely those red vertices are pebbled that have a token in
  $\macmark_{2}$. Add a special blue vertex $\macbvertex_{1}$ with
  $\macarcs^{+}$ arcs from $\macbvertex_{1}$ to $\macpvo_{1},
  \macpvo_{2}, \dots, \macpvo_{\macnumpv}$, $\macpartctro_{1}^{0},
  \dots, \macpartctro_{\macnumpart}^{0}$, $\macpartctrt_{1}^{0}, \dots,
  \macpartctrt_{\macnumpart}^{0}$, $\macclpl_{1}$ and $\macserpl$. Add
  $\macarcs^{-}$ arcs from all red vertices to $\macbvertex_{1}$. In
  the start state where there no pebbles at all, $\macbvertex_{1}$ is
  the only blue vertex enabled. The blue vertex $\macbvertex_{1}$ is
  enabled only in the start state. Upon performing the legal move
  using $\macbvertex_{1}$ from the start state, we will reach a state
  in which precisely those red vertices are pebbled that have a token
  in the initial marking of the $1$-safe net. From this state, there
  is at least one pebbled red vertex in any reachable state, so
  $\macbvertex_{1}$ is never enabled again. From this state, we can
  reach a state with the red vertex $\macglpl$ pebbled iff the given
  \macppwcnfsat{} instance is a \macyes{} instance. Add another
  special blue vertex $\macbvertex_{2}$ with an $\macarcs^{+}$ arc
  from the red vertex $\macglpl$ to $\macbvertex_{2}$. Add
  $\macarcs^{+}$ arcs from $\macbvertex_{2}$ to all red vertices. All
  blue vertices except $\macbvertex_{1}$ and $\macbvertex_{2}$
  unpebble at least one red vertex. Hence, the only way to reach the
  finish state (where all red vertices must be pebbled) from the start
  state is to enable $\macbvertex_{2}$. The only way to enable
  $\macbvertex_{2}$ is to reach a state where the red vertex
  $\macglpl$ is pebbled. Hence, the constructed \macsdp{} instance is
  a \macyes{} instance iff the given \macppwcnfsat{} instance is a
  \macyes{} instance.

  To complete the reduction, it only remains to show that the
  pathwidth of the \macsdp{} instance is bounded by the pathwidth of
  the flow graph of the intermediate $1$-safe net. Consider an
  optimal path decomposition of this flow graph. For every
  transition $\mactrans$, the set of all input and output places of
  $\mactrans$ forms a clique in the flow graph. Hence, there
  will be at least one bag $\macbag$ in the path decomposition
  containing all these places.  Create an extra bag $\macbag'$
  adjacent to $\macbag$ containing all elements of $\macbag$ and also
  the blue vertex corresponding to $\mactrans$. After doing this for
  each transition, add the vertices $\macbvertex_{1}$ and
  $\macbvertex_{2}$ to all bags. The resulting decomposition is a path
  decomposition of the \macsdp{} instance. Its width is at most $3$
  more than the pathwidth of the flow graph of the $1$-safe net.
\qed
  \end{proof}
\end{fver}

\subsection{$1$-safe Petri nets and benefit depth}
Here we show that the parameter benefit depth is not helpful for
$1$-safe Petri nets, by showing \macwone{}-hardness using a
parameterized reduction from the constraint satisfaction problem
(\maccsp).

\begin{theorem}
  \label{thm:BD1SafeReachHard}
  With benefit depth as the parameter in $1$-safe Petri nets,
  reachability, coverability, CTL model checking and the complement of
  the LTL/MSO model checking problems, even with formulas of constant
  size, are \macwone{}-hard.
\end{theorem}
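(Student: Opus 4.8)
The plan is to give a parameterized reduction from the constraint satisfaction problem \maccsp{} (parameterized by the number of variables, which is \macwone{}-hard), mirroring the structure of the pathwidth reduction in \lemref{lem:CNFSatPNReach} but now controlling benefit depth instead of pathwidth. Recall an instance of \maccsp{} has variables $\macvaro_{1},\dots,\macvaro_{\macnumvar}$ over a common finite domain $\macdom$ and constraints $\macconstro_{1},\dots,\macconstro_{\macnumconstr}$, each a relation over a few variables; the parameter is $\macnumvar$. First I would build, for each variable $\macvaro_{\macvaridx}$, a "choice gadget": a place $\macvaro_{\macvaridx}$ holding one token initially, together with, for each value $d \in \macdom$, a transition $\mactrans_{\macvaridx,d}$ consuming that token and depositing a token in a place $\macvaro_{\macvaridx}^{d}$ recording the chosen value. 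This commits each variable to exactly one value. Then, for each constraint $\macconstro_{\macclidx}$ and each satisfying tuple of that constraint, I would add a transition that reads (via self-loops, so as not to disturb the value-recording tokens) the appropriate places $\macvaro_{\macvaridx}^{d}$ and moves a progress token from $\macclpl_{\macclidx}$ to $\macclpl_{\macclidx+1}$, chaining the constraints in series exactly as the clauses are chained in \figref{fig:PWClSat}. A final transition fires once the progress token reaches $\macclpl_{\macnumconstr+1}$, placing a token in a goal place $\macglpl$; cleanup transitions remove leftover value tokens. The target marking has tokens in $\macglpl$ (plus whatever bookkeeping places we choose), and reachability of this marking is equivalent to satisfiability of the \maccsp{} instance, by an argument identical in shape to the one in \lemref{lem:CNFSatPNReach}: firing the choice transitions encodes an assignment, the constraint transitions can all fire iff that assignment is satisfying.

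The correctness direction is routine once the gadgets are fixed; the real work, and the main obstacle, is bounding the benefit depth by a function of $\macnumvar$ alone. Here I must be careful: benefit depth is about forward reachability through the "output place of an output transition" relation, so long serial chains like $\macclpl_{1} \to \macclpl_{2} \to \cdots \to \macclpl_{\macnumconstr+1} \to \macglpl$ are exactly what can blow it up, since $\macben(\macclpl_{1})$ would then contain all the $\macclpl_{\macclidx}$'s — and $\macnumconstr$ is not a parameter. So the series chaining of constraints used in the pathwidth proof is the wrong design for benefit depth. Instead I would have all constraint-checking transitions act on a \emph{single} shared place: give each constraint its own token in its own place $\macconstro_{\macclidx}$ initially, let the satisfying-tuple transition for $\macconstro_{\macclidx}$ consume that token and deposit a token in a place $done_{\macclidx}$, and let the final transition $\macglpl$-producer have all the $done_{\macclidx}$ as input places. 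Now each $\macben(\macplace)$ reaches at most a constant number of places downstream: from a value place $\macvaro_{\macvaridx}^{d}$ one reaches the constraint-done places of constraints mentioning $\macvaridx$, then $\macglpl$; from $\macglpl$ nothing new. The only remaining danger is the final collecting transition, whose output $\macglpl$ is benefited by \emph{all} its input places, but that contributes only $\macglpl$ itself (and any cleanup successors) to each $\macben$ set, which is a constant. Thus, with the right flat (non-serial) wiring, benefit depth is $O(1)$, or at worst $O(\macnumvar)$ if a variable participates in structurally many gadgets — either way a function of the parameter.

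Having established that the reduction is \macfpt{} and preserves the parameter bound, the extension to coverability, CTL, and the complement of LTL/MSO model checking is immediate and identical to the remark preceding \thmref{thm:PNReachPWHard}: it suffices to detect reachability of a marking putting a token on $\macglpl$, which is a coverability query, is expressible by a fixed-size CTL formula $\mathsf{EF}\,\macglpl$, and whose negation is a fixed-size LTL (hence MSO) formula $\mathsf{G}\,\neg\macglpl$ that fails on some run iff the instance is a \macyes{} instance. Since \maccsp{} parameterized by the number of variables is \macwone{}-hard, all the listed problems are \macwone{}-hard with benefit depth as parameter, even for constant-size formulas.
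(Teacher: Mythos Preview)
Your overall architecture (choice gadgets, flat constraint checking, a single collecting transition into $\macglpl$) is close to what the paper does, but the reduction as you state it does \emph{not} bound the benefit depth, because you picked the wrong parameterization of \maccsp{}.

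With the number of variables $\macnumvar$ as the sole parameter, both the domain size $|\macdom|$ and the degree (the number of constraints a variable occurs in) are unbounded. Look at $\macben(\macvaro_{\macvaridx})$ in your construction: the choice transitions $\mactrans_{\macvaridx,d}$ are output transitions of $\macvaro_{\macvaridx}$, and their output places are the $|\macdom|$ value places $\macvaro_{\macvaridx}^{d}$, so already $|\macben(\macvaro_{\macvaridx})|\ge 1+|\macdom|$, which is not bounded by any function of $\macnumvar$. It gets worse: since you read value places via self-loops, each constraint-checking transition is an \emph{output} transition of every value place it touches, and its output places include all the other value places on that transition as well as $done_{\macclidx}$. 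Hence $\macben(\macvaro_{\macvaridx}^{d})$ contains every $done_{\macclidx}$ for constraints mentioning $\macvaro_{\macvaridx}$ (unbounded degree), and also value places of other variables, which then cascade further. Your claimed $O(1)$ or $O(\macnumvar)$ bound is therefore false.

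The paper repairs exactly these two leaks. It reduces from \maccsp{} parameterized by domain size $\macdom$ \emph{and} degree $\macdeg$ (which is \macwone{}-hard by \cite{SS10}), so both fan-outs are parameters. And instead of a single value place $\macvaro_{\macvaridx}^{d}$ read by self-loops, it creates a separate copy $\macvaro[\macvaridx]_{\macconstridx}^{\macdomidx}$ for each constraint $\macconstro_{\macconstridx}$ in which $\macvaro_{\macvaridx}$ occurs; the choice transition deposits a token in each copy, and the constraint transition \emph{consumes} (does not self-loop on) its copies. Then $\macben(\macvaro_{\macvaridx})$ is bounded by $2+\macdeg(\macdom+1)$: the place itself, $\macglpl$, at most $\macdeg\cdot\macdom$ copies $\macvaro[\macvaridx]_{*}^{*}$, and at most $\macdeg$ constraint places $\macconstro_{\macconstridx}$. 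Your flat wiring idea is right; the missing ingredients are the $(\macdom,\macdeg)$ parameterization and the per-constraint, non-self-loop value places.
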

\begin{fver}
  The rest of this section is devoted to a proof of the above theorem.
To show that with benefit depth as parameter, reachability in $1$-safe
nets is \macwone{}-hard, we will show a \macfpt{} reduction from the
constraint satisfaction problem (\maccsp). With the size of the domain
$\macdom$ and the maximum number of constraints in which any one
variable can occur (called degree) $\macdeg$ as parameters, \maccsp{}
is \macwone{}-hard \cite[Corollary 2]{SS10}. Given an instance of
\maccsp{} with domain size $\macdom$, degree $\macdeg$, $\macnumvar$
variables and $\macnumconstr$ constraints, we construct a $1$-safe net
with the following places.
\begin{enumerate}
  \item For every variable $\macvaro_{\macvaridx}$, a place
    $\macvaro_{\macvaridx}$.
  \item For every constraint $\macconstro_{\macconstridx}$ where
    $\macconstridx$ is between $1$ and $\macnumconstr$, a place
    $\macconstro_{\macconstridx}$.
  \item For every $\macvaridx$ between $1$ and $\macnumvar$, for every
    domain element $\macdomidx$ between $1$ and $\macdom$, for every
    constraint $\macconstro_{ \macconstridx}$ in which
    $\macvaro_{ \macvaridx}$ appears, the place
    $\macvaro[\macvaridx]_{\macconstridx}^{\macdomidx}$.
  \item One place $\macglpl$ for checking that all constraints are
    satisfied.
\end{enumerate}
We assume without loss of generality that every variable occurs in at
least one constraint. Construction of the $1$-safe net is illustrated
in the following diagrams. For every variable $\macvaro_{\macvaridx}$
and domain value $\macdomidx$ (between $1$ and $\macdom$), part of the
net shown in \figref{fig:BDVarSet} is constructed.
\begin{figure}[!tp]
  \begin{center}
    \includegraphics{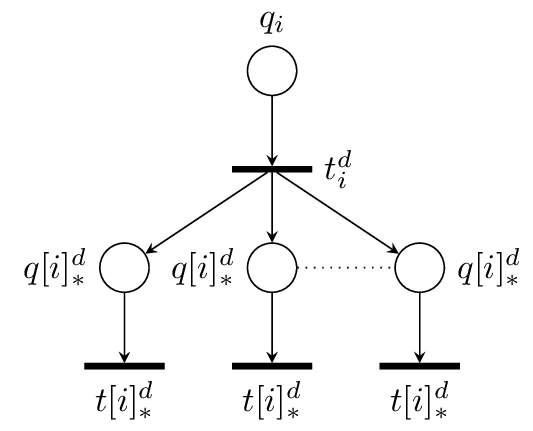}
  \end{center}
  \caption{Part of the net for every variable $\macvaro_{\macvaridx}$
  and domain value $\macdomidx$}
  \label{fig:BDVarSet}
\end{figure}
Intuitively, the transition $\mactrans_{\macvaridx}^{\macdomidx}$ is
fired to assign domain value $\macdomidx$ to $\macvaro_{\macvaridx}$.
In \figref{fig:BDVarSet}, the set of places labelled by
$\macvaro\macsbr{\macvaridx}_{*}^{\macdomidx}$ should be understood to
stand for the set of places
$\{\macvaro\macsbr{\macvaridx}_{\macconstridx}^{\macdomidx} \mid
\macvaro_{\macvaridx} \text{ occurs in constraint }
\macconstro_{ \macconstridx}\}$. The set of transitions labelled
$\mactrans\macsbr{\macvaridx}_{*}^{\macdomidx}$ should be similarly
understood.

For every constraint $\macconstro_{\macconstridx}$ and every
admissible tuple of domain values for $\macconstro_{\macconstridx}$,
part of the net shown in \figref{fig:BDConCheck} is constructed.
\begin{figure}[!tp]
  \begin{center}
    \includegraphics{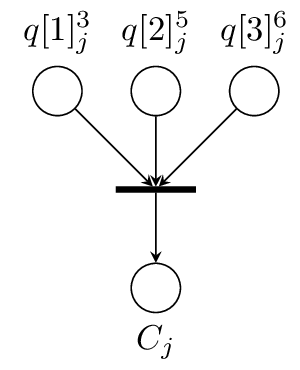}
  \end{center}
  \caption{Part of the net for every constraint
  $\macconstro_{\macconstridx}$ and every admissible tuple}
  \label{fig:BDConCheck}
\end{figure}
In \figref{fig:BDConCheck}, it is assumed that the constraint
$\macconstro_{\macconstridx}$ consists of variables
$\macvaro_{1}, \macvaro_{2}$ and $\macvaro_{3}$ and that $(3,5,6)$ is
an admissible tuple for this constraint. Finally, the part of the net in
\figref{fig:BDGlCheck} verifies that all constraints are satisfied.
\begin{figure}[!ht]
  \begin{center}
    \includegraphics{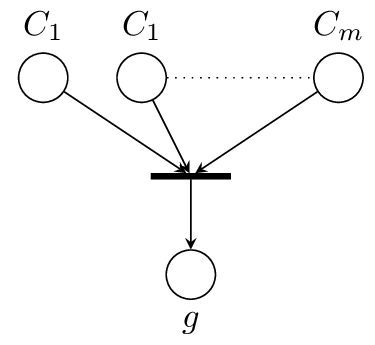}
  \end{center}
  \caption{Part of the net to check that all constraints are satisfied}
  \label{fig:BDGlCheck}
\end{figure}
The initial marking has $1$ token each in each of the places
$\macvaro_{1}, \dots, \macvaro_{\macnumvar}$ and $0$ tokens in all
other places. The final marking to be reached is $1$ token at the
place $\macglpl$ and $0$ tokens in all other places.

\begin{lemma}
  \label{lem:CSP1SafeNet}
  Given a \maccsp{} instance of domain size $\macdom$ and degree
  $\macdeg$, the benefit depth of the $1$-safe net constructed above
  is at most $2+ \macdeg(\macdom+1)$. The given \maccsp{} instance is
  satisfiable iff the required final marking is reachable from the
  initial marking in the constructed $1$-safe net.
\end{lemma}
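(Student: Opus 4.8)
The plan is to prove the two assertions separately. The benefit-depth bound is a purely structural computation on the constructed net, and the equivalence ``satisfiable iff the final marking is reachable'' is obtained by the same token-counting argument used for \lemref{lem:CNFSatPNReach}, tracking how the single token of each variable place propagates through the three gadgets. For the benefit depth I would compute $\macben(\macplace)$ for each of the four kinds of place. The place $\macglpl$ has no output transition, so $\macben(\macglpl)=\{\macglpl\}$. By \figref{fig:BDGlCheck}, a constraint place $\macconstro_{\macconstridx}$ is an input place only of the global-check transition, whose only output place is $\macglpl$, so $\macben(\macconstro_{\macconstridx})=\{\macconstro_{\macconstridx},\macglpl\}$. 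By \figref{fig:BDConCheck}, a place $\macvaro\macsbr{\macvaridx}_{\macconstridx}^{\macdomidx}$ is an input place only of tuple-check transitions of $\macconstro_{\macconstridx}$, each of which outputs only to $\macconstro_{\macconstridx}$, so $\macben(\macvaro\macsbr{\macvaridx}_{\macconstridx}^{\macdomidx})=\{\macvaro\macsbr{\macvaridx}_{\macconstridx}^{\macdomidx},\macconstro_{\macconstridx},\macglpl\}$. The decisive case is a variable place $\macvaro_{\macvaridx}$: by \figref{fig:BDVarSet} its output transitions are $\mactrans_{\macvaridx}^{1},\dots,\mactrans_{\macvaridx}^{\macdom}$, whose output places are the $\macvaro\macsbr{\macvaridx}_{\macconstridx}^{\macdomidx}$ for the at most $\macdeg$ constraints $\macconstro_{\macconstridx}$ containing $\macvaro_{\macvaridx}$ and the $\macdom$ domain values; closing under the computations above adds at most $\macdeg$ constraint places and the place $\macglpl$. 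Hence $|\macben(\macvaro_{\macvaridx})|\le 1+\macdeg\macdom+\macdeg+1=2+\macdeg(\macdom+1)$, which dominates the other cases.

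For the implication satisfiable $\Rightarrow$ reachable, given a satisfying assignment $\alpha$ I would, for each variable $\macvaridx$, fire $\mactrans_{\macvaridx}^{\alpha(\macvaridx)}$, moving the token of $\macvaro_{\macvaridx}$ into every place $\macvaro\macsbr{\macvaridx}_{\macconstridx}^{\alpha(\macvaridx)}$ with $\macvaro_{\macvaridx}$ occurring in $\macconstro_{\macconstridx}$; then, for each constraint $\macconstro_{\macconstridx}$, fire the tuple-check transition of \figref{fig:BDConCheck} for the tuple $(\alpha(\macvaridx))_{\macvaro_{\macvaridx}\in\macconstro_{\macconstridx}}$, which is admissible since $\alpha$ satisfies $\macconstro_{\macconstridx}$, consuming exactly those tokens and depositing one in $\macconstro_{\macconstridx}$; and finally fire the global-check transition to put a token in $\macglpl$. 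One checks that this sequence is enabled step by step and leaves a token only in $\macglpl$, reaching the required final marking.

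For the converse, suppose the final marking is reachable by a firing sequence $\macfirseqo$. A token in $\macglpl$ can be created only by the global-check transition, which needs a token in every $\macconstro_{\macconstridx}$; a token in $\macconstro_{\macconstridx}$ only by a tuple-check transition for $\macconstro_{\macconstridx}$, which needs a token in $\macvaro\macsbr{\macvaridx}_{\macconstridx}^{\macdomidx_{\macvaridx}}$ for the variables $\macvaro_{\macvaridx}$ of $\macconstro_{\macconstridx}$ and some admissible tuple $(\macdomidx_{\macvaridx})$; and a token in $\macvaro\macsbr{\macvaridx}_{\macconstridx}^{\macdomidx}$ only by $\mactrans_{\macvaridx}^{\macdomidx}$, which consumes the token of $\macvaro_{\macvaridx}$. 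Since $\macvaro_{\macvaridx}$ starts with one token and (by the construction) never receives another, at most one of $\mactrans_{\macvaridx}^{1},\dots,\mactrans_{\macvaridx}^{\macdom}$ fires in $\macfirseqo$, and exactly one does because $\macvaridx$ occurs in some constraint whose check fired; let $\alpha(\macvaridx)$ be that value. For any constraint $\macconstro_{\macconstridx}$ containing $\macvaro_{\macvaridx}$, the tuple-check that fired for $\macconstro_{\macconstridx}$ must use $\alpha(\macvaridx)$ for $\macvaro_{\macvaridx}$, since $\macvaro\macsbr{\macvaridx}_{\macconstridx}^{\macdomidx}$ stays empty throughout $\macfirseqo$ for every $\macdomidx\ne\alpha(\macvaridx)$; hence every constraint admits a tuple agreeing with $\alpha$, i.e.\ $\alpha$ is satisfying.

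The step I expect to be the main obstacle is this converse direction, specifically showing that the assignment $\alpha$ read off from $\macfirseqo$ is both well-defined (each variable fires at most one of its $\macdom$ assignment transitions) and globally consistent (the tuple used by each constraint check agrees with $\alpha$). Both hinge on the token flow through \figref{fig:BDVarSet}--\figref{fig:BDGlCheck} being acyclic, so the real content is to confirm from the gadgets that no transition ever returns a token to a variable place or to a variable--constraint place; the remaining bookkeeping is routine.
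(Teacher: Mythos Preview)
Your proposal is correct and follows essentially the same approach as the paper: a direct case analysis for the benefit depth (the paper only spells out the dominant case $\macvaro_{\macvaridx}$, while you treat all four kinds of places), and the same token-flow argument for the equivalence. One minor point: in the construction (\figref{fig:BDVarSet}) each place $\macvaro\macsbr{\macvaridx}_{\macconstridx}^{\macdomidx}$ is also an input of a ``delete'' transition $\mactrans\macsbr{\macvaridx}_{\macconstridx}^{\macdomidx}$ with no output place; this does not affect your benefit-depth computation, and your observation that the forward firing sequence already empties all $\macvaro\macsbr{\macvaridx}_{\macconstridx}^{\macdomidx}$ places makes the paper's explicit cleanup via these transitions redundant.
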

\begin{proof}
  Maximum number of places are benefited by some place in
  $\{\macvaro_{1}, \dots, \macvaro_{\macnumvar}\}$. Any place
  $\macvaro_{\macvaridx}$ can benefit itself, the place $\macglpl$,
  the set of places
  $\{\macvaro[\macvaridx]_{\macconstridx}^{\macdomidx} \mid 1 \le
  \macdomidx \le \macdom, \macvaro_{ \macvaridx} \text{ occurs in }
  \macconstro_{ \macconstridx}\}$ and at most $\macdeg$ places among
  $\{\macconstro_{1}, \dots, \macconstro_{\macnumconstr}\}$.  This
  adds up to at most $2+ \macdeg(\macdom+1)$.

  Suppose the given \maccsp{} instance is satisfiable. For each
  variable $\macvaro_{\macvaridx}$, if $\macdomidx$ is the domain
  value assigned to $\macvaro_{\macvaridx}$ by the satisfying
  assignment, fire the transition
  $\mactrans_{\macvaridx}^{\macdomidx}$ shown in
  \figref{fig:BDVarSet}. Since the satisfying assignment satisfies all
  the constraints, the transitions shown in \figref{fig:BDConCheck}
  can be fired to get a token into each of the places
  $\macconstro_{1}, \dots, \macconstro_{\macnumconstr}$. Then the
  transition shown in \figref{fig:BDGlCheck} can be fired to get a
  token in the place $\macglpl$. Any tokens remaining in places
  $\macvaro[\macvaridx]_{*}^{\macdomidx}$ can be removed by firing
  transitions $\mactrans[\macvaridx]_{*}^{\macdomidx}$ shown in
  \figref{fig:BDVarSet}. Now, the token in the place $\macglpl$ is the
  only token in the entire net and this is the final marking required
  to be reached.

  Suppose the required final marking is reachable in the constructed
  $1$-safe net. Consider any firing sequence reaching the required
  final marking. Since the final marking needs a token in the place
  $\macglpl$ and the only transition that can add token to $\macglpl$
  is the one shown in \figref{fig:BDGlCheck}, the firing sequence
  fires this transition. For this transition to be enabled, a token
  needs to be present in each of the places $\macconstro_{1}, \dots,
  \macconstro_{\macnumconstr}$. These tokens can only be added by
  firing transitions shown in \figref{fig:BDConCheck}. To fire these
  transitions, tokens needs to be present in the places
  $\macvaro[\macvaridx]_{*}^{\macdomidx}$. To generate these tokens,
  the firing sequence would have to fire some transition
  $\mactrans_{\macvaridx}^{\macdomidx}$ for each $\macvaridx$ between
  $1$ and $\macnumvar$. Consider the assignment that assigns domain
  value $\macdomidx$ to $\macvaro_{\macvaridx}$ iff the firing
  sequence fired $\mactrans_{\macvaridx}^{\macdomidx}$. By
  construction, this assignment satisfies all constraints.  \qed
\end{proof}
Since the $1$-safe net described above can be constructed in time
polynomial in the size of the given \maccsp{} instance,
\lemref{lem:CSP1SafeNet} shows that this reduction is a parameterized
reduction from \maccsp{} (with $\macdom$ and $\macdeg$ as parameters)
to reachability in $1$-safe nets (with benefit depth as the
parameter).  In the above reduction, it is enough to check if in the
constructed $1$-safe net, we can reach a marking that has a token at
the place $\macglpl$. This can be expressed as reachability,
coverability etc.  This proves \thmref{thm:BD1SafeReachHard}.

\end{fver}

\section{Vertex cover and model checking $1$-safe Petri nets}
In this section, we will show that with the vertex cover number of the
flow graph of the given $1$-safe Petri net and the size of the given
LTL/MSO formula as parameters, checking whether the given net is a
model of the given formula is \macfpt{}. With vertex cover number as
the only parameter, we cannot hope to get this kind of tractability:
\begin{proposition}
  \label{prop:VCOnlyVCNPHard}
  Model checking LTL (and hence MSO) formulas on $1$-safe Petri nets
  whose flow graph has constant vertex cover number is
  \macconp{}-hard.
\end{proposition}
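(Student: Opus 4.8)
The plan is a polynomial-time reduction from unsatisfiability of CNF formulas, which is \macconp{}-complete. Given a CNF formula $\macpcnff = \macclause_{1}\wedge\cdots\wedge\macclause_{\macnumcl}$ over variables $\macpvo_{1},\dots,\macpvo_{\macnumpv}$, I will construct a $1$-safe net $\macnet$ whose flow graph has vertex cover number at most $2$ and an LTL formula $\macltlfo$ of polynomial size such that every maximal run of $\macnet$ satisfies $\macltlfo$ at position $0$ if and only if $\macpcnff$ is unsatisfiable; together with the fact that LTL formulas are expressible by MSO sentences of comparable size, this proves the proposition.

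For the net, keep a fixed ``core'' consisting of two places $\macplace_{0}$ (initially marked) and $\macplace_{1}$, plus one idle transition whose only input and output place is $\macplace_{0}$. For each variable $\macpvo_{\macpvidx}$, add two ``satellite'' places $a_{\macpvidx},b_{\macpvidx}$ (initially empty) and four transitions: $\mathrm{set}A_{\macpvidx}$ with input place $\macplace_{0}$ and output places $\macplace_{1},a_{\macpvidx}$; $\mathrm{clr}A_{\macpvidx}$ with input places $\macplace_{1},a_{\macpvidx}$ and output place $\macplace_{0}$; and symmetrically $\mathrm{set}B_{\macpvidx},\mathrm{clr}B_{\macpvidx}$ acting on $b_{\macpvidx}$ instead of $a_{\macpvidx}$. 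A routine induction shows that $\macplace_{0}+\macplace_{1}=1$ in every reachable marking, that when $\macplace_{1}$ is marked exactly one satellite place is marked and the only enabled transition is the matching $\mathrm{clr}$, and that when $\macplace_{0}$ is marked no satellite is marked; hence $\macnet$ is $1$-safe. In the flow graph, two places are adjacent only when some transition is incident to both, and every transition of $\macnet$ is incident to at most one satellite; therefore every edge has an endpoint in $\{\macplace_{0},\macplace_{1}\}$, so the vertex cover number is at most $2$, independent of $\macpcnff$.

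For the formula, observe that along any run each marking is either of type $R$ ($\macplace_{0}$ marked, nothing else) or of the form ``$\macplace_{1}$ together with exactly one satellite $a_{\macpvidx}$ or $b_{\macpvidx}$''; an LTL formula tells these apart using the atoms $\macplace_{0}$, $\macplace_{1}\wedge a_{\macpvidx}$ and $\macplace_{1}\wedge b_{\macpvidx}$, and can refer to a marking a fixed number of steps in the future by iterating $\macnext$. Take $\macltlfo=\neg\chi$, where $\chi$ asserts (i) for each $\macpvidx\le\macnumpv$, the marking at step $2\macpvidx-2$ has type $R$ and the marking at step $2\macpvidx-1$ is $A_{\macpvidx}$ or $B_{\macpvidx}$; by the net semantics this forces the run to pick a truth value for $\macpvo_{1},\dots,\macpvo_{\macnumpv}$ in this order, and (ii) the resulting assignment satisfies every clause, i.e.\ $\bigwedge_{\macclidx=1}^{\macnumcl}\bigl(\bigvee_{\macpvo_{\macpvidx}\in\macclause_{\macclidx}}\macnext^{2\macpvidx-1}(\macplace_{1}\wedge a_{\macpvidx})\ \vee\ \bigvee_{\overline{\macpvo_{\macpvidx}}\in\macclause_{\macclidx}}\macnext^{2\macpvidx-1}(\macplace_{1}\wedge b_{\macpvidx})\bigr)$. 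For every truth assignment there is a maximal (infinite) run of $\macnet$ realising the pattern of (i) with exactly that assignment --- fire the corresponding $\mathrm{set}$/$\mathrm{clr}$ pairs in the order $\macpvidx=1,\dots,\macnumpv$ and then the idle transition forever --- and this run satisfies $\chi$ exactly when the assignment satisfies $\macpcnff$; conversely, any run satisfying $\chi$ induces, via (i), a satisfying assignment. Hence $\macnet$ has a maximal run violating $\macltlfo$ iff $\macpcnff$ is satisfiable, equivalently $\macnet\models\macltlfo$ iff $\macpcnff$ is unsatisfiable.

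The point I expect to be the crux is keeping the vertex cover number bounded: the net must not itself enforce the ``decide $\macpvo_{1}$, then $\macpvo_{2}$, $\dots$'' ordering, since any mechanism counting up to $\macnumpv$ --- even a path of places --- already has unbounded vertex cover. The fix is to keep a two-place core, realise each decision as a transient ``mark-then-clear'' of a satellite place that touches only the core, and move all of the sequencing and the clause checking into $\macltlfo$, whose size is permitted to grow with $\macpcnff$.
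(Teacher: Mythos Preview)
Your reduction is correct, but the paper's is both simpler and conceptually different. In the paper, the net has places $\macglpl_{1},\macglpl_{2}$ together with one place $\macpvo_{\macpvidx}$ per variable, with transitions (all touching $\macglpl_{1}$ or $\macglpl_{2}$) that can toggle each $\macpvo_{\macpvidx}$; thus every truth assignment is represented by a single reachable \emph{marking}, and the LTL formula is just $\lnot(\top\ \macuntil\ \macpcnff)$, using the given propositional formula $\macpcnff$ verbatim as an LTL subformula over the place atoms $\macpvo_{\macpvidx}$. By contrast, you encode an assignment not as a marking but as a length-$2\macnumpv$ prefix of a run, and your formula has to read it back position by position via stacks of $\macnext$ operators, giving a formula of size $\Theta(\macnumpv\cdot|\macpcnff|+\macnumpv^{2})$ rather than $\Theta(|\macpcnff|)$. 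Both constructions achieve vertex cover number $2$ by the same mechanism---every transition touches at most one ``variable'' place---so your final paragraph correctly identifies the crux; the paper simply avoids the sequencing issue altogether by never needing an order on the variables. Your approach does buy one small thing: it would still work in a logic without $\macuntil$ (pure $\macnext$ suffices), whereas the paper's formula needs an eventuality.
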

\begin{fver}
\begin{proof}
We give a reduction from the complement of propositional logic
satisfiability problem. Let $\macpcnff$ be a propositional formula
over variables $\macpvo_{1}, \dots, \macpvo_{\macnumpv}$.  Consider
the $1$-safe net $\macnet_{\macpcnff}$ shown in
\figref{fig:VCOnlyVCNPHard}.
\begin{figure}[!htp]
  \begin{center}
    \includegraphics{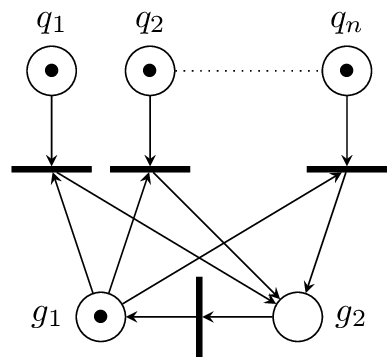}
  \end{center}
  \caption{The net $\macnet_{\macpcnff}$ associated with a
  propositional formula $\macpcnff$}
  \label{fig:VCOnlyVCNPHard}
\end{figure}
The initial marking consists of $0$ tokens in $\macglpl_{2}$ and $1$
token each in all other places. The flow graph of
$\macnet_{\macpcnff}$ has a vertex cover of size $2$ ($\{\macglpl_{1},
\macglpl_{2}\}$). Every marking $\macmark$ reachable in $\macnet_{
\macpcnff}$ defines an assignment to the variables used in
$\macpcnff$: $\macpvo_{\macpvidx} = \top$ iff $\macmark(
\macpvo_{\macpvidx}) = 1$. Every assignment can be represented by some
reachable marking in this way. We claim that $\macpcnff$ is not
satisfiable iff $\macnet_{\macpcnff}$ is a model of the LTL formula $\lnot
(\top~Until~\macpcnff)$. If $\macpcnff$ is not satisfiable, then none
of the markings reachable in $\macnet_{ \macpcnff}$ satisfies
$\macpcnff$. Hence, $\macnet_{ \macpcnff}$ is a model of the LTL
formula $\lnot (\top~Until~\macpcnff)$. On the other hand, if
$\macnet_{ \macpcnff}$ is a model of $\lnot (\top~Until~\macpcnff)$,
then none of the markings reachable in $\macnet_{ \macpcnff}$ satisfies
$\macpcnff$. Hence, $\macpcnff$ is not satisfiable.\qed
\end{proof}
\end{fver}

Since a run of a $1$-safe net $\macnet$ with set of places
$\macplaces$ is a sequence of subsets of $\macplaces$, we can think of
such sequences as strings over the alphabet $\macpset{\macplaces}$
(the power set of $\macplaces$). It is known \cite{Buchi,Vardi96} that
with any LTL or MSO formula $\macltlfo$, we can associate a finite
state automaton $\macfsa_{\macltlfo}$ over the alphabet
$\macpset{\macplaces}$ accepting the set of finite strings which are
its models, as well as a finite state B\"{u}chi automaton
$\macfsba_{\macltlfo}$ accepting the set of infinite string models.

\Figref{fig:MSExample} shows the schematic of a simple manufacturing
system modelled as a $1$-safe Petri net. Starting from
$\macplaceo_{1}$, it picks up one unit of a raw material $\macrmo$ and
goes to $\macplaceo_{2}$, then picks up raw material $\macrmt$, then
$\macrmh$. Transition $\mactranso_{1}$ does some processing and then
the system starts from $\macplaceo_{1}$ again. Suppose we want to
make sure that whenever the system picks up a unit of raw material
$\macrmt$, it is processed immediately. In other words, whenever the
system stops at a marking where no transitions are enabled, there
should not be a token in $\macplaceo_{3}$. This can be checked by
verifying that all finite maximal runs satisfy the formula
$\forall \macfovo ( (\forall \macfovt \quad \macfovt \le \macfovo)
\Rightarrow \lnot \macplaceo_{3}(\macfovo))$. The satisfaction of this
formula depends only on the number of units of raw materials $\macrmo,
\macrmt$ and $\macrmh$ at the beginning, i.e., the number of tokens at
the initial marking. The naive approach of constructing the whole
reachability graph results in an exponentially large state space, due
to the different orders in which the raw materials of each type can be
drawn. If we want to reason about only the central system (which is
the vertex cover $\{\macplaceo_{1}, \macplaceo_{2}, \macplaceo_{3},
\macplaceo_{4}\}$ in the above system), it turns out that we can
ignore the order and express the requirements on the numbers by
integer linear constraints.
\begin{figure}[!tp]
  \begin{center}
    \includegraphics{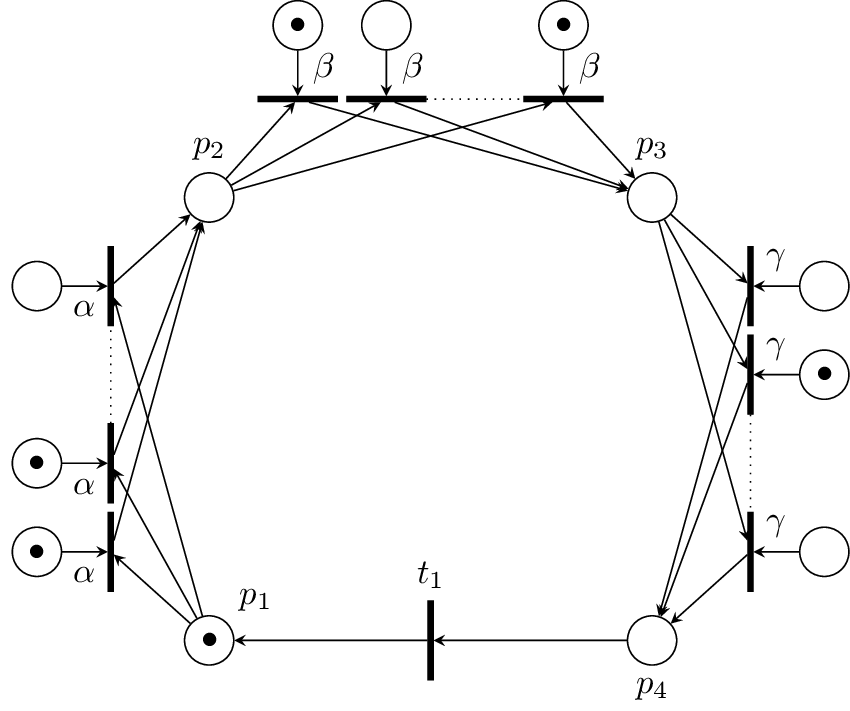}
  \end{center}
  \caption{An example of a system with small vertex cover}
  \label{fig:MSExample}
\end{figure}

Suppose $\macvertcov$ is a vertex cover for $\macgraph(\macnet)$. We
use the fact that if $\macverto_{1},\macverto_{2}\notin \macvertcov$
are two vertices not in $\macvertcov$ that have the same set of
neighbours, $\macverto_{1}$ and $\macverto_{2}$ have similar
properties. This has been used to obtain \macfpt{} algorithms for many
hard problems (e.g. \cite{FLMRS08}). The following definitions
formalize this.
\begin{definition}
Let $\macvertcov$ be a vertex cover of $\macgraph(\macnet)$. The
\concept{($\macvertcov$-) neighbourhood} of a transition $\mactranso$
is the ordered pair $( \macipls{ \mactranso} \cap \macvertcov,
\macopls{ \mactranso} \cap \macvertcov)$.  We denote by
$\macnumtranstype$ the number of different
$\macvertcov$-neighbourhoods.
\end{definition}
\begin{definition}
  \label{def:placeVariety}
  Suppose $\macnet$ is a Petri net with $\macnumtranstype$ neighbourhoods
  for vertex cover $\macvertcov$, and $\macplaceo\notin \macvertcov$. 
  The \concept{($\macvertcov$-) interface}
  $\macvar[\macplaceo]$ of $\macplaceo$ is defined as the
  function
  $\macvar[\macplaceo]: \{1,\dots,\macnumtranstype\} \to \macpset{\{-1,1\}}$, 
  where for every $\mactypeidx$ between $1$ and
  $\macnumtranstype$ and every $\macarcw \in \{1,-1\}$, there is a transition $\mactranso_{\mactypeidx}$
  of VC-neighbourhood $\mactypeidx$ such that $\macarcw =
  -\macpre(\macplaceo,\mactranso_{\mactypeidx})+\macpost(\macplaceo,
  \mactranso_{\mactypeidx})$ iff $\macarcw \in \macvar[\macplaceo]
  (\mactypeidx)$.  
\end{definition}
In the net in \figref{fig:MSExample} with
$\macvertcov=\{\macplaceo_{1}, \macplaceo_{2}, \macplaceo_{3},
\macplaceo_{4}\}$, all transitions labelled $\macrmo$ have the same
VC-neighbourhood and all the corresponding places have the same
VC-interface.  Since there can be $2\macvcnum$ arcs between a
transition and places in \macvertcov\/ if $|\macvertcov|=\macvcnum$,
there can be at most $2^{2\macvcnum}$ different VC-neighbourhoods of
transitions.  There are at most $4^{2^{2\macvcnum}}$ VC-interfaces.
The set of interfaces is denoted by $\macplvars$. 


\begin{proposition}
  \label{prop:PlSameVarTransEnable}
  Let $\macnet$ be a $1$-safe net with $\macvertcov$ being a vertex
  cover of $\macgraph(\macnet)$. Let $\macplace_{1}, \macplace_{2},
  \dots, \macplace_{\macplaceidx}$ be places not in the vertex cover,
  all with the same interface. Let $\macmark$ be some marking reachable
  from the initial marking of $\macnet$. If
  $\macmark(\macplace_{\macplaceidxt})=1$ for some $\macplaceidxt$ between
  $1$ and $\macplaceidx$, then $\macmark$ does not enable any
  transition that adds tokens to any of the places
  $\macplace_{1}, \dots, \macplace_{\macplaceidx}$.
\end{proposition}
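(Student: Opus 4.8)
The plan is to argue by contradiction. Suppose, towards a contradiction, that (with $\macplace_{1},\dots,\macplace_{\macplaceidx}$ as in the statement) $\macmark$ is reachable, $\macmark(\macplace_{\macplaceidxt})=1$, and $\macmark$ enables a transition $\mactrans$ adding a token to some $\macplace_{r}$ with $1\le r\le\macplaceidx$; I will then exhibit a reachable marking that assigns two tokens to $\macplace_{\macplaceidxt}$, contradicting the $1$-safeness of $\macnet$. Note first that, since $\macplace_{r}\notin\macvertcov$ has no self-loop, ``adds a token to $\macplace_{r}$'' forces $\macpre(\macplace_{r},\mactrans)=0$ and $\macpost(\macplace_{r},\mactrans)=1$, i.e.\ the net effect of $\mactrans$ on $\macplace_{r}$ is $+1$.

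The core idea is to produce a ``twin'' of $\mactrans$ that acts on $\macplace_{\macplaceidxt}$ in place of $\macplace_{r}$, exploiting that $\macplace_{r}$ and $\macplace_{\macplaceidxt}$ have the same interface. Let $\nu$ be the $\macvertcov$-neighbourhood of $\mactrans$. By the previous paragraph $1\in\macvar[\macplace_{r}](\nu)$; since $\macvar[\macplace_{r}]=\macvar[\macplace_{\macplaceidxt}]$, also $1\in\macvar[\macplace_{\macplaceidxt}](\nu)$, so by the definition of interface there is a transition $\mactrans'$ with $\macvertcov$-neighbourhood $\nu$ such that $\macpre(\macplace_{\macplaceidxt},\mactrans')=0$ and $\macpost(\macplace_{\macplaceidxt},\mactrans')=1$. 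If I can show that $\mactrans'$ is enabled at $\macmark$, then firing it yields a reachable marking $\macmark'$ with $\macmark'(\macplace_{\macplaceidxt})=\macmark(\macplace_{\macplaceidxt})-\macpre(\macplace_{\macplaceidxt},\mactrans')+\macpost(\macplace_{\macplaceidxt},\mactrans')=2$, which is the sought contradiction.

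The one nontrivial point --- and the step I expect to be the main obstacle --- is showing that $\mactrans'$ is enabled at $\macmark$, since a priori $\mactrans'$ could have an input place that $\macmark$ does not mark. I would settle this using two elementary facts about the flow graph: every self-loop vertex belongs to every vertex cover, and no transition is incident to two places that both lie outside $\macvertcov$ (such a pair would form an uncovered edge of $\macgraph(\macnet)$). Since $\macplace_{\macplaceidxt}\notin\macvertcov$ is an output place of $\mactrans'$, these facts leave no input place of $\mactrans'$ outside $\macvertcov$; hence $\macipls{\mactrans'}\subseteq\macvertcov$, and therefore $\macipls{\mactrans'}=\macipls{\mactrans'}\cap\macvertcov=\macipls{\mactrans}\cap\macvertcov$ because $\mactrans$ and $\mactrans'$ share the $\macvertcov$-neighbourhood $\nu$. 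As $\mactrans$ is enabled at $\macmark$, every place of $\macipls{\mactrans}$ --- in particular every place of $\macipls{\mactrans'}$ --- carries a token under $\macmark$, so $\mactrans'$ is enabled at $\macmark$. The remaining verifications (that $\macmark'$ is indeed reachable, and the bookkeeping with the firing rule and with the definitions of $\macvertcov$-neighbourhood and interface) are routine.
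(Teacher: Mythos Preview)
Your proposal is correct and follows essentially the same contradiction argument as the paper: find a twin transition $\mactrans'$ with the same $\macvertcov$-neighbourhood as $\mactrans$ that deposits a token into $\macplace_{\macplaceidxt}$, observe it is enabled at $\macmark$, and fire it to violate $1$-safeness. The paper's proof is terser---it asserts ``same neighbourhood, hence enabled'' in one parenthetical---whereas you have spelled out exactly why that implication holds (a transition incident to a place outside $\macvertcov$ can have no other non-$\macvertcov$ place, so $\macipls{\mactrans'}\subseteq\macvertcov$ and thus $\macipls{\mactrans'}=\macipls{\mactrans}\cap\macvertcov$); this is a welcome clarification but not a different argument.
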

\begin{proof}
  Suppose there is a transition $\mactrans$ enabled at $\macmark$ that
  adds a token to $\macplace_{\macplaceidxt'}$ for some
  $\macplaceidxt'$ between $1$ and $\macplaceidx$. Then there is a
  transition $\mactrans'$ with the same neighbourhood as $\mactrans$
  (and hence enabled at $\macmark$ too) that can add a token to
  $\macplace_{\macplaceidxt}$. Firing $\mactrans'$ from $\macmark$
  will create $2$ tokens at $\macplace_{\macplaceidxt}$, contradicting
  the fact that $\macnet$ is $1$-safe.
\qed
\end{proof}
If the initial marking has tokens in many places with the same
interface, then no transition can add tokens to any of those places
until all the tokens in all those places are removed. Once all tokens
are removed, one of the places can receive one token after which, no
place can receive tokens until this one is removed. All these places
have the same interface. Thus, a set of places with the same interface
can be thought of as an initial storehouse of tokens, after depleting
which it can be thought of as a single place.  However, a formula in
our logic can reason about individual places, so we still need to keep
track of individual places that occur in the formula.
\begin{proposition}
  \label{prop:LTLInvOtherPl}
  Let $\macnet$ be a $1$-safe net and $\macltlfo$ be an MSO formula.
  Let $\macplaces_{\macltlfo} \subseteq \macplaces$ be the subset of
  places that occur in $\macltlfo$. Let $\macmarks = \macmark_{0}
  \macmark_{1} \cdots$ and $\macmarks'= \macmark_{0}' \macmark_{1}'
  \cdots$ be two finite or infinite runs of $\macnet$ such that
  for all positions $\macseqidx$ of $\macmarks$ and for all $\macplace
  \in \macplaces_{\macltlfo}$, $\macmark_{\macseqidx}(\macplace) =
  \macmark_{\macseqidx}'(\macplace)$. For any assignment 
  $s$, we have $\macmarks, s \models
  \macltlfo$ iff $\macmarks', s \models \macltlfo$.
\end{proposition}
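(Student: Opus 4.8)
The plan is to do a structural induction on the MSO formula $\macltlfo$, generalizing the statement to account for the free first-order and second-order variables. Since the two runs $\macmarks$ and $\macmarks'$ agree, at every position, on the truth value of every place occurring in $\macltlfo$, the only atomic formulas that can appear in $\macltlfo$ are of the form $\macplace(x)$ with $\macplace \in \macplaces_{\macltlfo}$, order atoms $x \le y$, and set-membership atoms $X(x)$; none of these can distinguish $\macmarks$ from $\macmarks'$ once we fix an assignment $s$ to the free variables. The underlying linear order on positions (the domain of the run) is identical for $\macmarks$ and $\macmarks'$ — they are both indexed by the same firing sequence length — so order atoms and set atoms have assignment-dependent but run-independent truth values.

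Concretely, I would first observe that $\macmarks$ and $\macmarks'$ have the same set of positions (same length, whether finite or infinite), call it $\mathrm{dom}$; an assignment $s$ maps first-order variables to elements of $\mathrm{dom}$ and second-order variables to subsets of $\mathrm{dom}$, and makes sense simultaneously for both runs. I then prove by induction on $\macltlfo$ that for every assignment $s$ interpreting the free variables of $\macltlfo$, $\macmarks, s \models \macltlfo$ iff $\macmarks', s \models \macltlfo$. The base cases: for $\macplace(x)$ with $\macplace \in \macplaces_{\macltlfo}$, $\macmarks, s \models \macplace(x)$ iff $\macmark_{s(x)}(\macplace) = 1$ iff $\macmark'_{s(x)}(\macplace) = 1$ (by the hypothesis, since $s(x)$ is a position of $\macmarks$ and $\macplace \in \macplaces_{\macltlfo}$) iff $\macmarks', s \models \macplace(x)$; for $x \le y$ and $X(x)$ the truth value depends only on $s$, not on the run, so equivalence is immediate. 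The inductive steps for $\lnot$, $\lor$ (and $\land$) are routine since the free-variable sets only shrink or are shared. For $\exists x\, \psi$: $\macmarks, s \models \exists x\, \psi$ iff there is $a \in \mathrm{dom}$ with $\macmarks, s[x \mapsto a] \models \psi$ iff (by the induction hypothesis applied to $\psi$ and the assignment $s[x\mapsto a]$, noting $\macplaces_\psi \subseteq \macplaces_{\macltlfo}$, so the agreement hypothesis still holds) there is $a \in \mathrm{dom}$ with $\macmarks', s[x\mapsto a] \models \psi$ iff $\macmarks', s \models \exists x\,\psi$; the case $\exists X\, \psi$ is identical with $a$ ranging over subsets of $\mathrm{dom}$. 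Universal quantifiers are handled by duality or symmetrically.

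I do not expect any serious obstacle here — the statement is essentially the observation that MSO sentences are invariant under isomorphisms of labelled structures, and here the two structures are literally equal except possibly on the interpretations of places not mentioned in $\macltlfo$, which no atom of $\macltlfo$ can read. The one point requiring a little care is to state the induction hypothesis in the right generality (quantified over all assignments to the free variables and over the \emph{common} domain $\mathrm{dom}$), so that the quantifier cases go through; stating it only for sentences would not give a usable induction. Another minor point: one should confirm that $\macplaces_\psi \subseteq \macplaces_{\macltlfo}$ for every subformula $\psi$ of $\macltlfo$, which is immediate from the definition of "places occurring in a formula", so that the hypothesis "$\macmark_{\macseqidx}(\macplace) = \macmark'_{\macseqidx}(\macplace)$ for all $\macplace \in \macplaces_{\macltlfo}$" transfers to every subformula.
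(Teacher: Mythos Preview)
Your proposal is correct and follows exactly the approach the paper takes: the paper's proof is the single line ``By a straightforward induction on the structure of $\macltlfo$,'' and you have simply spelled out that induction in full detail, including the appropriate generalization over assignments to free variables.
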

\begin{fver}
  \begin{proof}
    By a straightforward induction on the structure of $\macltlfo$.
  \qed
  \end{proof}
\end{fver}

Let $\macnet$ be a $1$-safe net such that $\macgraph(\macnet)$ has a
vertex cover $\macvertcov$ of size $\macvcnum$. Suppose $\macltlfo$ is
a formula and we have to check if $\macnet$ satisfies $\macltlfo$.
For each interface $\macplvar$, let $\macplaces_{\macplvar} \subseteq
\macplaces$ be the places not in $\macvertcov$ with interface
$\macplvar$. If $\macplaces_{\macplvar} \setminus
\macplaces_{\macltlfo} \ne \emptyset$ (i.e., if there are places in
$\macplaces_{\macplvar}$ that are not in $\macltlfo$), designate one
of the places in $\macplaces_{\macplvar} \setminus
\macplaces_{\macltlfo}$ as $\macplace_{\macplvar}$. Define the
set of special places $\macsplaces = \macvertcov \cup
\macplaces_{\macltlfo} \cup \{\macplace_{\macplvar} \in
\macplaces_{\macplvar} \setminus \macplaces_{\macltlfo} \mid \macplvar
\text{ is an interface and } \macplaces_{\macplvar} \setminus
\macplaces_{\macltlfo} \ne \emptyset\}$. Note that $|\macsplaces| \le
\macvcnum + |\macltlfo| + 4^{2^{2\macvcnum}}$. Since this number is a
function of the parameters of the input instance, we will treat it as
a parameter.

We need a structure that keeps track of changes in places belonging to
$\macsplaces$, avoiding a construction involving all reachable
markings. This can be done by a finite state machine whose states are
subsets of $\macsplaces$. Transitions of the Petri net that only
affect places in $\macsplaces$ can be simulated by the finite state
machine with its usual transitions. To simulate transitions of the net
that affect places outside $\macsplaces$, we need to impose some
conditions on the number of times transitions of the finite state
machine can be used. The following definition formalizes this.  For a
marking $\macmark$ of $\macnet$, let $\macmark\macrestr{\macsplaces} =
\{\macplace\in \macsplaces \mid \macmark(\macplace) = 1\}$.
\begin{definition}
  \label{def:ConstrAutNet}
  Given a $1$-safe net $\macnet$ with initial marking
  $\macmark_{0}$ and $\macsplaces$ defined from $\macltlfo$ as above, 
  the \concept{edge constrained automaton} 
  $\macfsa_{\macnet} = (\macstates_{\macnet}, \macalph,
  \mactransrel_{\macnet}, \macedgeconstr, \macfinss_{\macnet})$ is a structure
  defined as follows.  $\macstates_{\macnet} = \macpset{\macsplaces}$
  and $\macalph = \macplvars \cup \{\bot\}$ 
  (recall that $\macplvars$ is the set of interfaces in
  $\macnet$).  The transition relation $\mactransrel \subseteq
  \macstates_{\macnet} \times \macalph \times \macstates_{\macnet}$ is
  such that for all $\macplaces_{1}, \macplaces_{2}
  \subseteq \macsplaces$ and $\macplvar \in \macplvars \cup \{\bot\}$,
  $(\macplaces_{1}, \macplvar, \macplaces_{2}) \in
  \mactransrel$ iff there are markings $\macmark_{1}, \macmark_{2}$
  and a transition $\mactrans$ of $\macnet$ such that
  \begin{itemize}
    \item $\macmark_{1}\macrestr{\macsplaces} = \macplaces_{1}$,
      $\macmark_{2} \macrestr{\macsplaces} = \macplaces_{2}$ and
      $\macmark_{1} \macStep{\mactrans} \macmark_{2}$,
    \item $\mactrans$ removes a token from a place $\macplace \in
      \macplaces_{\macplvar} \setminus \macsplaces$ of interface
      $\macplvar$ if $\macplvar \in \macplvars$ and
    \item $\mactrans$ does not have any of its input or output places
      in $\macplaces \setminus \macsplaces$ if $\macplvar = \bot$.
  \end{itemize}
  The \concept{edge constraint} $\macedgeconstr:\macplvars \to \macNat$ 
  is given by
  $\macedgeconstr(\macplvar) = |\{\macplace \in
  \macplaces_{\macplvar}\setminus \macsplaces \mid
  \macmark_{0}(\macplace) = 1\}|$. A subset 
  $\macplaces_{1} \subseteq \macsplaces$ is in 
  $\macfinss_{\macnet}$ iff for every marking $\macmark$ with
  $\macmark\macrestr{\macsplaces} = \macplaces_{1}$, the only
  transitions enabled at $\macmark$ remove tokens from
  some place not in $\macsplaces$.
\end{definition}
Intuitively, the edge constraint $\macedgeconstr$ defines an upper
bound on the number of times those transitions can be used that reduce
tokens from places not in $\macsplaces$.
\begin{definition}
  \label{def:ConstrAutAcc}
  Let $\macfsa_{\macnet}$ be an edge constrained automaton as in
  \defref{def:ConstrAutNet} and let $\macmarks = \macplaces_{0}
  \macplaces_{1} \cdots$ be a finite or infinite word over
  $\macpset{\macsplaces}$. Then $\macmarks$ is a valid run of
  $\macfsa_{\macnet}$ iff for every position $\macseqidx \ge 1$ of
  $\macmarks$, we can associate an element $\macplvar_{\macseqidx} \in
  \macalph$ such that
  \begin{itemize}
    \item for every position $\macseqidx \ge 1$ of $\macmarks$,
      $(\macplaces_{\macseqidx-1}, \macplvar_{\macseqidx},
      \macplaces_{\macseqidx}) \in \mactransrel$ and
    \item for every $\macplvar \in \macplvars$, $|\{\macseqidx \ge
      1 \mid \macplvar_{\macseqidx}=\macplvar\}| \le
      \macedgeconstr(\macplvar)$.
    \item if $\macmarks$ is finite and
      $\macplaces_{\macseqidx}$ is the last element of $\macmarks$, then
      $\macplaces_{\macseqidx} \in \macfinss_{\macnet}$ and
      for every interface $\macplvar \in \macplvars$ and marking
      $\macmark_{\macseqidx}
      \macrestr{\macsplaces} = \macplaces_{\macseqidx}$ enabling some
      transition that removes tokens from some place in
      $\macplaces_{\macplvar} \setminus \macsplaces$,
      $|\{\macseqidx \ge
      1 \mid \macplvar_{\macseqidx}=\macplvar\}| =
      \macedgeconstr(\macplvar)$.
  \end{itemize}
\end{definition}

Next we have a run construction lemma.
\begin{lemma}
  \label{lem:NetFirSeqConstrAutAcc}
  Let $\macnet$ be a $1$-safe net with initial marking $\macmark_{0}$,
  $\macltlfo$ be a formula and $\macfsa_{\macnet}$ be as in
  \defref{def:ConstrAutNet}. For every infinite (maximal finite) run
  $\macmarks = \macmark_{0} \macmark_{1} \cdots$ of $\macnet$, there
  exists an infinite (finite) run $\macmarks' =\macmark_{0}'
  \macmark_{1}' \cdots$ such that the word $(\macmark_{0}'
  \macrestr{\macsplaces}) (\macmark_{1}'\macrestr{\macsplaces}) \cdots
  $ is a valid run of $\macfsa_{\macnet}$ and for every position
  $\macseqidx$ of $\macmarks$, $\macmark_{\macseqidx}'
  \macrestr{\macplaces_{\macltlfo}} = \macmark_{\macseqidx}
  \macrestr{\macplaces_{\macltlfo}}$.  If an infinite (finite) word
  $\macmarks = \macplaces_{0} \macplaces_{1} \cdots$ over
  $\macpset{\macsplaces}$ is a valid run of $\macfsa_{\macnet}$ and
  $\macplaces_{0} = \macmark_{0} \macrestr{\macsplaces}$, then there
  is an infinite (finite maximal) run $\macmark_{0} \macmark_{1}
  \cdots$ of $\macnet$ such that $\macmark_{\macseqidx}
  \macrestr{\macsplaces} = \macplaces_{\macseqidx}$ for all positions
  $\macseqidx$ of $\macmarks$.
\end{lemma}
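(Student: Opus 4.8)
The plan rests on one structural fact. Since $\macvertcov$ is a vertex cover of $\macgraph(\macnet)$ and every self-loop place lies in $\macvertcov$, no transition can be incident to two distinct places outside $\macvertcov$; hence every transition $\mactrans$ has \emph{at most one} input-or-output place outside $\macvertcov$, and that place (if present) is either purely an input or purely an output place of $\mactrans$. Consequently each transition $\mactrans$ is of exactly one of three types: (a) $\mactrans$ touches only places in $\macsplaces$ (its unique non-$\macvertcov$ place, if any, lies in $\macsplaces$); (b) $\mactrans$ removes a token from a place $\macplace\in\macplaces_{\macplvar}\setminus\macsplaces$ of some interface $\macplvar$; (c) $\mactrans$ adds a token to a place $\macplace\in\macplaces_{\macplvar}\setminus\macsplaces$. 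Type (a) corresponds to $\bot$-edges of $\macfsa_{\macnet}$, type (b) to $\macplvar$-edges, and type (c) has no counterpart in $\macfsa_{\macnet}$. The second tool is that places with the same interface offer the same repertoire of effects: if $\mactrans$ has $\macvertcov$-neighbourhood $\mactypeidx$ and its unique non-$\macvertcov$ place $\macplace$ (interface $\macplvar$) satisfies $-\macpre(\macplace,\mactrans)+\macpost(\macplace,\mactrans)=\macarcw$, then for any place $\macplace'$ of interface $\macplvar$ there is a transition $\mactrans'$ of neighbourhood $\mactypeidx$ having $\macplace'$ as its unique non-$\macvertcov$ place with $-\macpre(\macplace',\mactrans')+\macpost(\macplace',\mactrans')=\macarcw$; moreover $\mactrans$ and $\mactrans'$ act identically on $\macvertcov$. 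Finally, \propref{prop:PlSameVarTransEnable} says no type-(b) or type-(c) transition is enabled while \emph{any} place of the relevant interface already carries a token.

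\textbf{From a valid run of $\macfsa_{\macnet}$ to a run of $\macnet$.} Given $\macmarks=\macplaces_{0}\macplaces_{1}\cdots$ with $\macplaces_{0}=\macmark_{0}\macrestr{\macsplaces}$ and its witnessing labels $\macplvar_{\macseqidx}$, I build markings $\macmark_{0}\macmark_{1}\cdots$ by induction on $\macseqidx$, keeping $\macmark_{\macseqidx}\macrestr{\macsplaces}=\macplaces_{\macseqidx}$ and leaving every place outside $\macsplaces$ untouched except as follows: for each interface $\macplvar$ I ``reserve'' the $\macedgeconstr(\macplvar)$ places of $\macplaces_{\macplvar}\setminus\macsplaces$ marked at $\macmark_{0}$, and since at most $\macedgeconstr(\macplvar)$ positions carry the label $\macplvar$ I assign these reserved places injectively to those positions. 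A $\bot$-step is simulated by its witnessing transition from \defref{def:ConstrAutNet}, which has all its places in $\macsplaces$, so it is enabled at $\macmark_{\macseqidx-1}$ and yields $\macplaces_{\macseqidx}$; a $\macplvar$-step is simulated not by its witnessing transition $\mactrans$ but by the same-neighbourhood transition $\mactrans'$ that consumes the reserved place assigned to this position, which is enabled because $\mactrans'$ has the same $\macvertcov$-input places as $\mactrans$ (marked, as $\macplaces_{\macseqidx-1}$ determines enabledness) and the reserved place is still marked (reserved places are untouched until consumed); $1$-safety is preserved since the witnessing net move already respected it. For finite $\macmarks$, the clauses ``$\macplaces_{n}\in\macfinss_{\macnet}$'' and the edge-count equality force $\macmark_{n}$ to be dead: any transition enabled at $\macmark_{n}$ would (by $\macplaces_{n}\in\macfinss_{\macnet}$) remove a token from some $\macplace\in\macplaces_{\macplvar}\setminus\macsplaces$, witnessing $\macplvar$ ``dangerous'', so the equality clause says all $\macedgeconstr(\macplvar)$ reserved places of interface $\macplvar$ are already consumed, contradicting that $\macplace$ is marked (no step of this run ever adds to $\macplaces_{\macplvar}\setminus\macsplaces$).

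\textbf{From a run of $\macnet$ to a valid run of $\macfsa_{\macnet}$.} Starting from $\macmark_{0}'=\macmark_{0}$, I normalise $\macmarks=\macmark_{0}\macmark_{1}\cdots$ step by step into $\macmarks'$, firing at each position either the same transition as $\macmarks$ or a rerouted one, so as to maintain the invariant that $\macmark_{\macseqidx}'$ agrees with $\macmark_{\macseqidx}$ on $\macvertcov\cup\macplaces_{\macltlfo}$, that $\macmark_{\macseqidx}'$ and $\macmark_{\macseqidx}$ have the same token count on every class $\macplaces_{\macplvar}$, and that no step of $\macmarks'$ ever adds a token to $\macplaces_{\macplvar}\setminus\macsplaces$. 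Type-(a) steps, and type-(b) steps removing a token that in $\macmarks'$ still sits where it does in $\macmarks$, are copied verbatim, giving labels $\bot$ resp.\ $\macplvar$; because of the invariant the label $\macplvar$ is used at most $\macedgeconstr(\macplvar)$ times. A type-(c) step of $\macmarks$ adds a token to some $\macplace\in\macplaces_{\macplvar}\setminus\macsplaces$, which by \propref{prop:PlSameVarTransEnable} is possible only when every place of interface $\macplvar$ — in particular $\macplace_{\macplvar}\in\macsplaces$ — is empty; I reroute it to the same-neighbourhood transition that adds the token to $\macplace_{\macplvar}$ instead (enabled, $1$-safe, and now of type (a) hence a $\bot$-step), and I reroute the later step that removes that token so that it removes it from $\macplace_{\macplvar}$. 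Then $(\macmark_{0}'\macrestr{\macsplaces})(\macmark_{1}'\macrestr{\macsplaces})\cdots$ follows $\mactransrel$ at each step, respects the edge constraints, and agrees with $\macmarks$ on $\macplaces_{\macltlfo}$; for infinite $\macmarks$ nothing more is needed, and for finite maximal $\macmarks$ one checks $\macmark_{n}'\macrestr{\macsplaces}\in\macfinss_{\macnet}$ and the equality clause, using the same-interface substitution to show that a dead $\macmark_{n}$ with a token in $\macplaces_{\macplvar}\setminus\macsplaces$ makes $\macplvar$ non-dangerous (otherwise an enabled removal transition would exist at $\macmark_{n}$), while every dangerous $\macplvar$ has $\macplaces_{\macplvar}\setminus\macsplaces$ empty at $\macmark_{n}'$ and hence has all $\macedgeconstr(\macplvar)$ of its labels used.

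\textbf{Main obstacle.} The hard part is the second direction: making the rerouting of type-(c) steps precise and verifying that the invariant survives the interleaving, for each interface $\macplvar$, of an initial ``depletion phase'' of $\macplaces_{\macplvar}$ with a later ``oscillation phase'' in which only $\macplace_{\macplvar}$ is active; and then reading off from the invariant that the finite-run clauses of \defref{def:ConstrAutAcc} ($\macfinss_{\macnet}$-membership and the edge-count equalities) hold. The first direction is comparatively routine once the structural fact and the same-interface substitution are available.
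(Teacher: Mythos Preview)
Your proposal is correct and follows essentially the same two-direction argument as the paper: reroute transitions that touch $\macplaces_{\macplvar}\setminus\macsplaces$ onto the designated place $\macplace_{\macplvar}$ after the ``depletion phase'' (what the paper calls the first marking $\macmark_{\macplvar}$ where $\macplaces_{\macplvar}$ is empty), and conversely realise $\macplvar$-labelled edges by consuming fresh initially-marked places of $\macplaces_{\macplvar}\setminus\macsplaces$. One slip in your toolkit: \propref{prop:PlSameVarTransEnable} forbids only \emph{token-adding} transitions while some place of the interface is marked, so it constrains your type-(c) transitions but says nothing about type-(b); fortunately your actual argument never uses the type-(b) half of that claim.
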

\begin{fver}
  \begin{proof}
      Let $\macmarks = \macmark_{0} \macmark_{1} \cdots$ be an infinite or
  a maximal finite run of $\macnet$. For every interface $\macplvar
  \in \macplvars$, perform the following steps: if for some marking
  $\macmark$ in the above run, $\{ \macplace\in
  \macplaces_{\macplvar}\mid \macmark( \macplace) = 1\} = \emptyset$,
  let $\macmark_{\macplvar}$ be the first such marking. By
  \propref{prop:PlSameVarTransEnable}, no transition occurring before
  $\macmark_{\macplvar}$ will add any token to any place in
  $\macplaces_{\macplvar}$. If there is any transition occurring after
  $\macmark_{\macplvar}$ that adds/removes tokens from
  $\macplaces_{\macplvar}\setminus \macsplaces$, replace it with
  another transition with the same neighbourhood that adds/removes
  tokens from $\macplace_{\macplvar}$. By
  \propref{prop:PlSameVarTransEnable}, such a replacement will not
  affect any place in $\macplaces_{\macltlfo}$ and the new sequence of
  transitions is still enabled at $\macmark_{0}$. After performing
  this process for every interface $\macplvar \in \macplvars$, let the
  new run be $\macmarks' = \macmark_{0}' \macmark_{1}' \cdots$. By
  construction, we have $\macmark_{\macseqidx}'
  \macrestr{\macplaces_{\macltlfo}} = \macmark_{\macseqidx}
  \macrestr{\macplaces_{\macltlfo}}$ for all positions $\macseqidx \ge
  0$ of $\macmarks$. If $\macmarks$ is a maximal finite run, so is
  $\macmarks'$.

  Now we will prove that the word $(\macmark_{0}' \macrestr{\macsplaces})
  (\macmark_{1}' \macrestr{\macsplaces}) \cdots$
  is a valid run of $\macfsa_{\macnet}$. Suppose the sequence of
  transitions producing the run $\macmarks'$ is
  $\macmark_{0}' \macStep{\mactrans_{1}} \macmark_{1}'
  \macStep{ \mactrans_{2}} \macmark_{2}' \cdots$. For each position
  $\macseqidx \ge 1$ of this run, define $\macplvar_{\macseqidx} \in
  \macalph$ as follows:
  \begin{itemize}
    \item if $\mactrans_{\macseqidx}$ has all its input and output
      places among places $\macsplaces$, let $\macplvar_{\macseqidx} =
      \bot$.
    \item if $\mactrans_{\macseqidx}$ removes a token from some place
      in $\macplaces_{\macplvar} \setminus \macsplaces$ for some
      interface $\macplvar$, let $\macplvar_{\macseqidx} = \macplvar$.
      Due to the way $\macmarks'$ is constructed, this kind of
      transition can only occur before the position of
      $\macmark_{\macplvar}$ and the number of such occurrences is at
      most $|\{\macplace \in \macplaces_{\macplvar} \setminus
      \macsplaces \mid \macmark_{0}(\macplace) = 1\}| =
      \macedgeconstr(\macplvar)$.
  \end{itemize}
  Due to the way $\macmarks'$ is constructed, there will not be any
  transition that adds tokens to any place in $\macplaces_{\macplvar}
  \setminus \macsplaces$ for any interface $\macplvar$. By definition,
  it is clear that for every position $\macseqidx \ge 1$ of
  $\macmarks'$, $(\macmark_{\macseqidx-1}' \macrestr{\macsplaces},
  \macplvar_{\macseqidx}, \macmark_{\macseqidx}'
  \macrestr{\macsplaces}) \in \mactransrel_{\macnet}$. In addition,
  for every interface $\macplvar \in \macplvars$, we have $|\{
  \macseqidx \ge 1 \mid \macplvar_{\macseqidx} = \macplvar\}| \le
  \macedgeconstr(\macplvar)$. Hence, the word 
  $(\macmark_{0}' \macrestr{\macsplaces})
   (\macmark_{1}' \macrestr{\macsplaces}) \cdots$
  is a valid run of $\macfsa_{\macnet}$ if the word is infinite. If
  $\macmarks'$ is finite, suppose $\macmark_{\macfslidx}'$ is the
  last marking of the sequence $\macmarks'$. Suppose for some variety
  $\macplvar \in \macplvars$, there is some marking
  $\macmark$ such that $\macmark
  \macrestr{\macsplaces} = \macmark_{\macfslidx}'
  \macrestr{\macsplaces}$ and $\macmark$ enables some
  transition $\mactrans$ that removes tokens from some place in
  $\macplaces_{\macplvar} \setminus \macsplaces$. Since
  $\macmark_{\macfslidx}'$ does not enable any transition, all
  transitions (including $\mactrans$) removing tokens from some place
  in $\macplaces_{\macplvar} \setminus \macsplaces$ are disabled in
  $\macmark_{\macfslidx}'$. This means that every place in
  $\macplaces_{\macplvar} \setminus \macsplaces$ that had a token in
  $\macmark_{0}$ has lost its token in $\macmark_{\macfslidx}'$.
  Since such loss of tokens can only happen by firing transitions that
  remove tokens from places in $\macplaces_{\macplvar} \setminus
  \macsplaces$, we have $|\{\macseqidx \ge 1 \mid
  \macplvar_{\macseqidx} = \macplvar\}|=\macedgeconstr(\macplvar)$.
  Hence, to prove that $(\macmark_{0}' \macrestr{\macsplaces})
  (\macmark_{1}' \macrestr{\macsplaces}) \cdots (\macmark_{
  \macfslidx}' \macrestr{\macsplaces})$ is a valid run of
  $\macfsa_{\macnet}$, it is left to show that $\macmark_{\macfslidx}'
  \in \macfinss_{\macnet}$. To see that this is true, observe that if
  some marking $\macmark$ with $\macmark \macrestr{\macsplaces} =
  \macmark_{\macfslidx}'$ enables a transition that does not remove
  any token from $\macplaces \setminus \macsplaces$, then so does
  $\macmark_{\macfslidx}'$, a contradiction.

  Next, suppose $\macmarks = \macplaces_{0} \macplaces_{1} \cdots$ is
  an infinite or finite word that is a valid run
  of $\macfsa_{\macnet}$ such that $\macplaces_{0} = \macmark_{0}
  \macrestr{\macsplaces}$. For every position $\macseqidx \ge 1$ of
  $\macmarks$, there are $\macplvar_{\macseqidx} \in \macplvars \cup
  \{\bot\}$, transition $\mactrans_{\macseqidx}'$ and markings
  $\macmark_{\macseqidx-1}'$ and $\macmark_{\macseqidx}'$ such that
  $(\macplaces_{\macseqidx-1}, 
  \macplvar_{\macseqidx}, \macplaces_{\macseqidx}) \in
  \mactransrel_{\macnet}$, $\macmark_{\macseqidx-1}'
  \macStep{\mactrans_{\macseqidx}'} \macmark_{\macseqidx}'$,
  $\macmark_{\macseqidx-1}' \macrestr{ \macsplaces} =
  \macplaces_{\macseqidx-1}$ and $\macmark_{\macseqidx}' \macrestr{
  \macsplaces} = \macplaces_{\macseqidx}$. Define transitions
  $\mactrans_{i}$ as follows:
  \begin{itemize}
    \item If $\macplvar_{\macseqidx} = \bot$, transition
      $\mactrans_{\macseqidx}'$ has all its input and output places in
      $\macsplaces$. Let $\mactrans_{\macseqidx} =
      \mactrans_{\macseqidx}'$.
    \item If $\macplvar_{\macseqidx} = \macplvar \in \macplvars$,
      transition $\mactrans_{\macseqidx}'$ removes a token from some
      place in $\macplaces_{\macplvar} \setminus \macsplaces$. Let
      $\mactrans'$ be a transition of the same neighbourhood as
      $\mactrans_{\macseqidx}'$ that removes a token from some place
      $\macplace_{\macseqidx} \in \{\macplace \in
      \macplaces_{\macplvar} \setminus \macsplaces \mid
      \macmark_{0}(\macplace) =1\}$ such that no transition among
      $\mactrans_{1}, \dots, \mactrans_{\macseqidx-1}$ removes tokens
      from $\macplace_{\macseqidx}$. This is possible since, due to
      the validity of $\macmarks$ in $\macfsa_{\macnet}$, $|\{
      \macseqidx' \ge 1 \mid \macplvar_{\macseqidx'} = \macplvar\}|
      \le |\{\macplace \in \macplaces_{\macplvar} \setminus
      \macsplaces \mid \macmark_{0}(\macplace) =1\}| = \macedgeconstr
      ( \macplvar)$. Let $\mactrans_{\macseqidx} = \mactrans'$.
  \end{itemize}
  We will now prove by induction on $\macseqidx$ that there are
  markings $\macmark_{0}, \macmark_{1}, \dots$ such that
  $\macmark_{0} \macStep{\mactrans_{1}} \macmark_{1}
  \macStep{\mactrans_{2}} \cdots \macStep{\mactrans_{\macseqidx}}
  \macmark_{\macseqidx}$ and $\macmark_{\macseqidx}
  \macrestr{\macsplaces} = \macplaces_{\macseqidx}$ for every position
  $\macseqidx$ of $\macmarks$.

  \emph{Base case $\macseqidx = 1$:} If $\macplvar_{1} = \bot$, the
  fact that $\macmark_{0} \macrestr{\macsplaces} = \macplaces_{0}$,
  $\macmark_{0}' \macStep{\mactrans_{1}} \macmark_{1}'$ and that
  $\mactrans_{1}$ has all its input and output places in $\macsplaces$
  implies that $\macmark_{0} \macStep{\mactrans_{1}} \macmark_{1}$ for
  some $\macmark_{1}$ such that $\macmark_{1} \macrestr{\macsplaces} =
  \macplaces_{1}$. If $\macplvar_{\macseqidx} = \macplvar \in
  \macplvars$, then $\mactrans_{1}$ removes a token from some place
  $\macplace_{1} \in \macplaces_{\macplvar} \setminus \macsplaces$.
  Again the fact that $\macmark_{0} \macrestr{\macsplaces} =
  \macplaces_{0}$ and $\macmark_{0}' \macStep{\mactrans_{1}'}
  \macmark_{1}'$ implies that $\macmark_{0} \macStep{\mactrans_{1}}
  \macmark_{1}$ for some $\macmark_{1}$ such that $\macmark_{1}
  \macrestr{\macsplaces} = \macplaces_{1}$.

  \emph{Induction step:} If $\macplvar_{\macseqidx+1} = \bot$, the
  fact that $\macmark_{\macseqidx} \macrestr{\macsplaces} =
  \macplaces_{\macseqidx}$, $\macmark_{\macseqidx}'
  \macStep{\mactrans_{\macseqidx+1}} \macmark_{\macseqidx+1}'$ and
  that $\mactrans_{\macseqidx+1}$ has all its input and output places
  in $\macsplaces$ implies that $\macmark_{\macseqidx}
  \macStep{\mactrans_{\macseqidx+1}} \macmark_{\macseqidx+1}$ for some
  $\macmark_{\macseqidx+1}$ such that $\macmark_{\macseqidx+1}
  \macrestr{\macsplaces} = \macplaces_{\macseqidx+1}$. If
  $\macplvar_{\macseqidx+1} = \macplvar \in \macplvars$, then
  $\mactrans_{\macseqidx+1}$ removes a token from some place
  $\macplace_{\macseqidx+1} \in \macplaces_{\macplvar} \setminus
  \macsplaces$. Again the fact that $\macmark_{\macseqidx}
  \macrestr{\macsplaces} = \macplaces_{\macseqidx}$ and
  $\macmark_{\macseqidx}' \macStep{\mactrans_{\macseqidx+1}'}
  \macmark_{\macseqidx+1}'$ implies that $\macmark_{\macseqidx}
  \macStep{\mactrans_{\macseqidx+1}} \macmark_{\macseqidx+1}$ for some
  $\macmark_{\macseqidx+1}$ such that $\macmark_{\macseqidx+1}
  \macrestr{\macsplaces} = \macplaces_{\macseqidx+1}$.

  If $\macmarks$ is a finite word, we have to prove that the run
  constructed above is a maximal finite run. Let
  $\macmark_{\macfslidx}$ be the last marking in the sequence
  constructed above. We will prove that $\macmark_{\macfslidx}$ does
  not enable any transition. Suppose some transition $\mactrans$ is
  enabled at $\macmark_{\macfslidx}$.  Since $\macmark_{\macfslidx}
  \macrestr{ \macsplaces} \in \macfinss_{\macnet}$, $\mactrans$
  removes a token from some place in $\macplaces_{\macplvar} \setminus
  \macsplaces$ for some variety $\macplvar$. Since $|\{\macseqidx \ge
  1 \mid \macplvar_{\macseqidx} = \macplvar\}| =
  \macedgeconstr(\macplvar)$, there are $\macedgeconstr(\macplvar)$
  transition occurrences among $\mactrans_{1}, \dots,
  \mactrans_{\macfslidx}$ that each remove a token from some place in
  $\macplaces_{\macplvar} \setminus \macsplaces$.  Since there were
  exactly $\macedgeconstr(\macplvar)$ places in
  $\macplaces_{\macplvar} \setminus \macsplaces$ that had a token in
  $\macmark_{0}$ and no other transition adds any token to any place
  in $\macplaces_{\macplvar} \setminus \macsplaces$, $\mactrans$ can
  not be enabled at $\macmark_{\macfslidx}$. Hence, no transition is
  enabled at $\macmark_{\macfslidx}$.
\qed
  \end{proof}
\end{fver}


\Lemref{lem:NetFirSeqConstrAutAcc} implies that in order to check if
$\macnet$ is a model of the formula $\macltlfo$, it is enough to check
that all valid runs of $\macfsa_{\macnet}$ satisfy $\macltlfo$. This
can be done by checking that no finite valid run of
$\macfsa_{\macnet}$ is accepted by $\macfsa_{\lnot \macltlfo}$ and no
infinite valid run of $\macfsa_{\macnet}$ is accepted by
$\macfsba_{\lnot \macltlfo}$.  As usual, this needs a product
construction.  Automata $\macfsa_{\lnot \macltlfo}$ and
$\macfsba_{\lnot \macltlfo}$ run on the alphabet
$\macpset{\macplaces_{\macltlfo}}$. Let $\macstates_{\macfsa}$ and
$\macstates_{\macfsba}$ be the set of states of $\macfsa_{\lnot
\macltlfo}$ and $\macfsba_{\lnot \macltlfo}$ respectively. Then,
$\macfsa_{\lnot \macltlfo} = (\macstates_{\macfsa},
\macpset{\macplaces_{\macltlfo}}, \mactransrel_{\macfsa},
\macstates_{0 \macfsa}, \macfinss_{\macfsa})$ and $\macfsba_{\lnot
\macltlfo} = ( \macstates_{\macfsba},
\macpset{\macplaces_{\macltlfo}}, \mactransrel_{\macfsba},
\macstates_{0 \macfsba}, \macfinss_{\macfsba})$.
\begin{definition}
  \label{def:NetAutProd}
  $\macfsa_{\macnet} \times \macfsa_{\lnot \macltlfo} = (
  \macstates_{\macnet}\times \macstates_{\macfsa}, \macalph,
  \mactransrel_{\macfsa}^{\macnet},
  \{\macmark_{0}\macrestr{\macsplaces}\}\times \macstates_{0 \macfsa},
  \macfinss_{\macnet} \times \macfinss_{\macfsa}, \macedgeconstr)$,
  $\macfsa_{\macnet} \times \macfsba_{\lnot \macltlfo} = (
  \macstates_{\macnet} \times \macstates_{\macfsba}, \macalph,
  \mactransrel_{\macfsba}^{\macnet}, \{\macmark_{0}
  \macrestr{\macsplaces} \}\times \macstates_{0 \macfsba},
  \macstates_{\macnet} \times \macfinss_{\macfsba}, \macedgeconstr)$ where
  \begin{align*}
    ( (\macstateo_{1}, \macstateo_{2}), \macplvar,
    ( \macstateo_{1}', \macstateo_{2}')) \in
    \mactransrel_{\macfsa}^{\macnet} \text{ iff } (
    \macstateo_{1}, \macplvar, \macstateo_{1}') \in
    \mactransrel_{\macnet} \text{ and } (\macstateo_{2},
    \macstateo_{1} \cap \macplaces_{\macltlfo}, \macstateo_{2}') \in
    \mactransrel_{\macfsa}\\
    ( (\macstateo_{1}, \macstateo_{2}), \macplvar,
    ( \macstateo_{1}', \macstateo_{2}')) \in
    \mactransrel_{\macfsba}^{\macnet} \text{ iff } (
    \macstateo_{1}, \macplvar, \macstateo_{1}') \in
    \mactransrel_{\macnet} \text{ and } (\macstateo_{2},
    \macstateo_{1} \cap \macplaces_{\macltlfo}, \macstateo_{2}') \in
    \mactransrel_{\macfsba}
  \end{align*}
  An accepting path of $\macfsa_{\macnet} \times \macfsa_{\lnot
  \macltlfo}$ is a sequence $(\macstateo_{0},\macstateo_{0}') \macplvar_1
  (\macstateo_{1},\macstateo_{1}') \cdots \macplvar_{\macfslidx}
  (\macstateo_{\macfslidx},\macstateo_{\macfslidx}')$ 
  which is $\mactransrel_{\macfsa}^{\macnet}$-respecting:
  \begin{itemize}
    \item $(\macstateo_{0},\macstateo_{0}'),
      (\macstateo_{1},\macstateo_{1}'), \dots,
      (\macstateo_{\macfslidx},\macstateo_{\macfslidx}') \in
      \macstates_{\macnet} \times \macstates_{\macfsa}$,
    \item the word $\macplvar_{1} \cdots \macplvar_{\macfslidx} \in
      \macalph^{*}$ witnesses the validity of the run $\macstateo_{0}
      \macstateo_{1} \cdots \macstateo_{\macfslidx}$ in $\macfsa_{
      \macnet}$ (as in \defref{def:ConstrAutAcc}) and
    \item the word $(\macstateo_{0} \cap \macplaces_{\macltlfo})
      \cdots (\macstateo_{\macfslidx} \cap \macplaces_{\macltlfo})$ is
      accepted by $\macfsa_{\lnot \macltlfo}$ through the run
      $\macstateo_{0}' \macstateo_{1}' \cdots \macstateo_{\macfslidx}'
      \macstateo_{\macfinss}'$ for some $\macstateo_{\macfinss}' \in
      \macfinss_{\macfsa}$ with $( \macstateo_{\macfslidx}',
      \macstateo_{\macfslidx}\cap \macplaces_{\macltlfo},
      \macstateo_{\macfinss}' )\in \mactransrel_{\macfsa}$.
  \end{itemize}
  An accepting path of
  $\macfsa_{\macnet} \times \macfsba_{\lnot \macltlfo}$ is defined
  similarly.
\end{definition}
\begin{proposition}
  \label{prop:ModelCheckAccPath}
  A $1$-safe net $\macnet$ with initial marking $\macmark_{0}$
  is a model of a formula $\macltlfo$ iff there is no accepting path
  in $\macfsa_{\macnet} \times \macfsa_{\lnot \macltlfo}$ and
  $\macfsa_{\macnet} \times \macfsba_{\lnot \macltlfo}$.
\end{proposition}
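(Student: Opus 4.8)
The plan is to chain three facts: the automata-theoretic characterisation of the models of $\macltlfo$, the invariance of satisfaction under changes to places outside $\macplaces_{\macltlfo}$ (\propref{prop:LTLInvOtherPl}), and the run-construction \lemref{lem:NetFirSeqConstrAutAcc}; everything is carried out up to projection of markings onto $\macsplaces$ (and hence onto $\macplaces_{\macltlfo} \subseteq \macsplaces$).

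First I would recall that, by \defref{def:MC}, $\macnet$ is a model of $\macltlfo$ iff no maximal run of $\macnet$ (finite or infinite) satisfies $\lnot\macltlfo$. By \propref{prop:LTLInvOtherPl}, whether a run $\macmarks = \macmark_{0}\macmark_{1}\cdots$ satisfies $\macltlfo$ depends only on the sequence $(\macmark_{0}\macrestr{\macplaces_{\macltlfo}})(\macmark_{1}\macrestr{\macplaces_{\macltlfo}})\cdots$ over $\macpset{\macplaces_{\macltlfo}}$, and by the automata-theoretic method \cite{Buchi,Vardi96} a finite (respectively infinite) such word is a model of $\lnot\macltlfo$ iff it is accepted by $\macfsa_{\lnot\macltlfo}$ (respectively by $\macfsba_{\lnot\macltlfo}$). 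Hence $\macnet$ is a model of $\macltlfo$ iff for no maximal finite run of $\macnet$ is the $\macplaces_{\macltlfo}$-projection accepted by $\macfsa_{\lnot\macltlfo}$ and for no infinite run is it accepted by $\macfsba_{\lnot\macltlfo}$.

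Next I would use \lemref{lem:NetFirSeqConstrAutAcc} to replace ``maximal run of $\macnet$ from $\macmark_{0}$'' by ``valid run of $\macfsa_{\macnet}$ from $\macmark_{0}\macrestr{\macsplaces}$'' in both directions. Given a maximal run of $\macnet$, the lemma produces a maximal run $\macmarks'$ with the same $\macplaces_{\macltlfo}$-projection whose $\macsplaces$-projection is a valid run of $\macfsa_{\macnet}$; conversely, every valid run of $\macfsa_{\macnet}$ starting at $\macmark_{0}\macrestr{\macsplaces}$ is the $\macsplaces$-projection of a maximal run of $\macnet$. Since $\macplaces_{\macltlfo} \subseteq \macsplaces$, the set of $\macplaces_{\macltlfo}$-projections of maximal runs of $\macnet$ equals the set of $\macplaces_{\macltlfo}$-projections of valid runs of $\macfsa_{\macnet}$ from $\macmark_{0}\macrestr{\macsplaces}$ (matching finite with finite and infinite with infinite). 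Combined with the previous step, $\macnet$ is a model of $\macltlfo$ iff no finite valid run of $\macfsa_{\macnet}$ has $\macplaces_{\macltlfo}$-projection accepted by $\macfsa_{\lnot\macltlfo}$ and no infinite valid run has $\macplaces_{\macltlfo}$-projection accepted by $\macfsba_{\lnot\macltlfo}$.

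Finally I would unwind \defref{def:NetAutProd}: a finite accepting path of $\macfsa_{\macnet}\times\macfsa_{\lnot\macltlfo}$ is, by definition, precisely a finite valid run of $\macfsa_{\macnet}$ from $\macmark_{0}\macrestr{\macsplaces}$ (edge constraints respected, ending in $\macfinss_{\macnet}$, saturating the constraints of \defref{def:ConstrAutAcc}) whose $\macplaces_{\macltlfo}$-projection is accepted by $\macfsa_{\lnot\macltlfo}$; likewise an infinite accepting path of $\macfsa_{\macnet}\times\macfsba_{\lnot\macltlfo}$ is an infinite valid run of $\macfsa_{\macnet}$ whose $\macplaces_{\macltlfo}$-projection is accepted by $\macfsba_{\lnot\macltlfo}$. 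So ``there is no accepting path in either product'' is literally the statement reached above, and the proposition follows. The main care point is the bookkeeping for maximal \emph{finite} runs: one must check that the $\macfinss_{\macnet}$ component of the product together with the edge-constraint-saturation clause of \defref{def:ConstrAutAcc} exactly captures the maximality of the corresponding firing sequence of $\macnet$, so the finite-case equivalence goes through with no slack; the infinite case, where each interface edge is used only finitely often and the B\"uchi condition is inherited verbatim from $\macfsba_{\lnot\macltlfo}$ via the product's accepting set $\macstates_{\macnet}\times\macfinss_{\macfsba}$, is then routine.
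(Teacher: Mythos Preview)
Your proposal is correct and follows essentially the same route as the paper: both arguments combine \propref{prop:LTLInvOtherPl}, \lemref{lem:NetFirSeqConstrAutAcc}, and the unwinding of \defref{def:NetAutProd} to pass between maximal runs of $\macnet$, valid runs of $\macfsa_{\macnet}$, and accepting paths in the products. The paper presents the two implications separately (one by contradiction, one by direct construction of an accepting path), whereas you phrase it as a single chain of iff's, but the substance and the key ingredients are identical.
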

\begin{fver}
\begin{proof}
  Suppose $\macnet$ is a model of $\macltlfo$. Hence, all maximal runs
  of $\macnet$ satisfy $\macltlfo$. We will prove that there is no
  accepting path in $\macfsa_{\macnet} \times \macfsa_{\lnot
  \macltlfo}$ and $\macfsa_{\macnet} \times \macfsba_{\lnot
  \macltlfo}$. Assume by way of contradiction that there is an
  accepting path $(\macstateo_{0},\macstateo_{0}') \macplvar_1
  (\macstateo_{1},\macstateo_{1}') \cdots \macplvar_{\macfslidx}
  (\macstateo_{\macfslidx},\macstateo_{\macfslidx}')$ in
  $\macfsa_{\macnet} \times \macfsa_{\lnot \macltlfo}$. By
  \defref{def:NetAutProd}, $\macstateo_{0} \macstateo_{1} \cdots
  \macstateo_{\macfslidx}$ is a valid run of $\macfsa_{\macnet}$. By
  \lemref{lem:NetFirSeqConstrAutAcc}, there is a finite maximal run
  $\macmark_{0} \macmark_{1} \cdots \macmark_{ \macfslidx}$ of
  $\macnet$ with $\macmark_{\macseqidx} \macrestr{\macsplaces} =
  \macstateo_{\macseqidx}$ for all positions $0 \le \macseqidx \le
  \macfslidx$. By \defref{def:NetAutProd}, $(\macstateo_{0} \cap
  \macplaces_{\macltlfo}) \cdots (\macstateo_{\macfslidx} \cap
  \macplaces_{\macltlfo})$ is accepted by $\macfsa_{\lnot \macltlfo}$
  and hence satisfies $\lnot \macltlfo$. \Propref{prop:LTLInvOtherPl}
  now implies that $\macmark_{0} \macmark_{1} \cdots \macmark_{
  \macfslidx}$ satisfies $\lnot \macltlfo$, a contradiction. The
  argument for $\macfsa_{\macnet} \times \macfsba_{\lnot \macltlfo}$
  is similar.

  Suppose $\macnet$ is not a model of $\macltlfo$. Suppose there is a
  finite maximal run $\macmark_{0} \macmark_{1} \cdots \macmark_{
  \macfslidx}$ of $\macnet$ that satisfies $\lnot \macltlfo$. By
  \lemref{lem:NetFirSeqConstrAutAcc}, there is a finite maximal run
  $\macmarks' =\macmark_{0}' \macmark_{1}' \cdots \macmark_{
  \macfslidx}'$ such that the word
  $(\macmark_{0}' \macrestr{\macsplaces})
  (\macmark_{1}'\macrestr{\macsplaces}) \cdots (\macmark_{
  \macfslidx}'\macrestr{\macsplaces})$ is a valid run of
  $\macfsa_{\macnet}$ and for every position $\macseqidx$ of
  $\macmarks'$, $\macmark_{\macseqidx}'
  \macrestr{\macplaces_{\macltlfo}} = \macmark_{\macseqidx}
  \macrestr{\macplaces_{\macltlfo}}$. By \propref{prop:LTLInvOtherPl},
  $(\macmark_{0}' \macrestr{\macplaces_{\macltlfo}})
  (\macmark_{1}'\macrestr{\macplaces_{\macltlfo}}) \cdots (\macmark_{
  \macfslidx}'\macrestr{\macplaces_{\macltlfo}})$ satisfies $\lnot
  \macltlfo$ and hence accepted by $\macfsa_{ \lnot \macltlfo}$, say
  with the run $\macstateo_{0}' \macstateo_{1}' \cdots
  \macstateo_{\macfslidx}' \macstateo_{\macfinss}'$. Let the word
  $\macplvar_{1} \cdots \macplvar_{\macfslidx} \in \macalph^{*}$
  witness the validity of the run $(\macmark_{0}'
  \macrestr{\macsplaces}) (\macmark_{1}'\macrestr{\macsplaces}) \cdots
  (\macmark_{ \macfslidx}'\macrestr{\macsplaces})$ in $\macfsa_{
  \macnet}$, as in \defref{def:ConstrAutAcc}. By
  \defref{def:NetAutProd}, the sequence $(\macmark_{0}'
  \macrestr{\macplaces_{\macltlfo}},\macstateo_{0}') \macplvar_1
  (\macmark_{1}' \macrestr{\macplaces_{\macltlfo}},\macstateo_{1}')
  \cdots \macplvar_{\macfslidx} (\macmark_{\macfslidx}'
  \macrestr{\macplaces_{\macltlfo}},\macstateo_{\macfslidx}')$ is an
  accepting path of $\macfsa_{ \macnet} \times \macfsa_{ \lnot
  \macltlfo}$. The argument for maximal infinite runs is similar.\qed
\end{proof}
\end{fver}

To efficiently check the existence of accepting paths in
$\macfsa_{\macnet} \times \macfsa_{\lnot \macltlfo}$ and
$\macfsa_{\macnet} \times \macfsba_{\lnot \macltlfo}$, it is
convenient to look at them as graphs, possibly with self loops and
parallel edges. Let the set of states be the set of vertices of the
graph and each transition $(\macstateo, \macplvar_{\macseqidx},
\macstateo')$ be an $\macplvar_{\macseqidx}$-labelled edge leaving
$\macstateo$ and entering $\macstateo'$.  If there is a path
$\macpath$ in the graph from $\macstateo$ to $\macstateo'$, the number
of times an edge $\macedge$ occurs in $\macpath$ is denoted by
$\macpath(\macedge)$. If $\macvertext \notin \{\macstateo,
\macstateo'\}$ is some node occurring in $\macpath$, then the number
of edges of $\macpath$ entering $\macvertext$ is equal to the number
of edges of $\macpath$ leaving $\macvertext$. These conditions can be
expressed as integer linear constraints.
\begin{align}
  \label{eq:GraphPathILPConstr}
  \sum_{\macedge \text{ leaves } \macstateo} \macpath(\macedge) -
  \sum_{\macedge \text{ enters } \macstateo} \macpath(\macedge) &= 1 \nonumber\\
  \sum_{\macedge \text{ enters } \macstateo'} \macpath(\macedge) -
  \sum_{\macedge \text{ leaves } \macstateo'} \macpath(\macedge) &= 1 \\
  \macvertext \notin \{\macstateo, \macstateo'\}: \sum_{\macedge
  \text{ enters }\macvertext} \macpath(\macedge) &= \sum_{\macedge
  \text{ leaves }\macvertext} \macpath(\macedge) \nonumber
\end{align}
\begin{lemma}[Theorem 2.1, \cite{CRIC1990}]
  \label{lem:GraphPathILPConstr}
  In a directed graph $\macgraph = (\macvertexs, \macedges)$ (possibly
  with self loops and parallel edges), let $\macpath: \macedges
  \to \macNat$ be a function such that the underlying undirected graph
  induced by edges $\macedge$ such that $\macpath(\macedge) > 0$ is
  connected. Then, there is a path from $\macstateo$ to $\macstateo'$
  with each edge $\macedge$ occurring $\macpath(\macedge)$ times iff
  $\macpath$ satisfies the constraints 
  \eqref{eq:GraphPathILPConstr} above.
\end{lemma}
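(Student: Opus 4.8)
The plan is to read the constraints~\eqref{eq:GraphPathILPConstr} as the classical degree conditions for an Eulerian trail in a directed multigraph. The forward direction is immediate: if $\macpath$ records, for a fixed walk from $\macstateo$ to $\macstateo'$, how many times each edge $\macedge$ is traversed, then every visit of the walk to an intermediate vertex $\macvertext$ consumes exactly one $\macpath$-occurrence of an edge entering $\macvertext$ and one of an edge leaving $\macvertext$, so the two sums agree for $\macvertext \notin \{\macstateo, \macstateo'\}$; the walk starts at $\macstateo$ and ends at $\macstateo'$, which accounts for the extra $+1$ on the out-side at $\macstateo$ and on the in-side at $\macstateo'$. (If $\macstateo = \macstateo'$ the two displayed equations are inconsistent, so we may assume $\macstateo \ne \macstateo'$; also any $\macpath$ satisfying the constraints is not identically zero, hence $\macstateo$ and $\macstateo'$ are non-isolated among the edges with $\macpath(\macedge) > 0$.)

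For the converse, suppose $\macpath$ satisfies~\eqref{eq:GraphPathILPConstr} and the edges $\macedge$ with $\macpath(\macedge) > 0$ induce a connected underlying undirected graph. Form the directed multigraph $H$ on vertex set $\macvertexs$ that contains, for each edge $\macedge$ of $\macgraph$, exactly $\macpath(\macedge)$ parallel copies of $\macedge$; isolated vertices of $H$ are irrelevant, and the subgraph of $H$ on its non-isolated vertices is precisely the (connected) positive-support graph. The constraints say exactly that in $H$ the out-degree equals the in-degree at every vertex except $\macstateo$, where out-degree exceeds in-degree by one, and $\macstateo'$, where in-degree exceeds out-degree by one. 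These are precisely the hypotheses of the directed Eulerian-trail theorem, so $H$ has an Eulerian trail, which necessarily starts at $\macstateo$ and ends at $\macstateo'$. Transcribing this trail back into $\macgraph$ yields a path (in the walk sense used here, allowing repeated vertices and edges) from $\macstateo$ to $\macstateo'$ that uses each copy of each edge exactly once, i.e.\ uses $\macedge$ itself exactly $\macpath(\macedge)$ times.

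The only substantive ingredient is the directed Eulerian-trail theorem itself, which I would invoke as a classical fact rather than reprove (its standard proof is by induction on the number of edges, repeatedly peeling off a closed sub-trail and splicing). I expect the main place where care is needed is the bookkeeping of degenerate cases --- ruling out $\macstateo = \macstateo'$ and $\macpath \equiv 0$, and making sure the connectivity hypothesis is applied to the positive-support subgraph (equivalently, to the non-isolated part of $H$) rather than to all of $\macgraph$, so that vertices outside the support cause no trouble. Everything else is a routine translation between the linear-constraint formulation and the multigraph formulation.
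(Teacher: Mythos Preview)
Your argument is correct, but note that the paper does not actually prove this lemma: it is quoted verbatim as Theorem~2.1 of the cited reference and used as a black box. There is therefore no ``paper's own proof'' to compare against. Your reduction to the directed Eulerian-trail theorem is the standard proof of this fact and is exactly what one would expect the cited source to contain; your handling of the degenerate case $\macstateo = \macstateo'$ also matches the paper's remark immediately following the lemma that small modifications are needed in that situation.
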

If the beginning and the end of a path are same (i.e., if $\macstateo
= \macstateo'$), small modifications of \eqref{eq:GraphPathILPConstr}
and \lemref{lem:GraphPathILPConstr} are required.  Finally we can
prove our desired theorem.

\begin{theorem}
  \label{thm:LTLModChFPT}
  Let $\macnet$ be a $1$-safe net with initial marking $\macmark_{0}$
  and $\macltlfo$ be a MSO formula. Parameterized by the vertex cover
  number of $\macgraph(\macnet)$ and the size of $\macltlfo$, checking
  whether $\macnet$ is a model of $\macltlfo$ is \macfpt{}.
\end{theorem}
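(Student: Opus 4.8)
The plan is to turn the characterisation of \propref{prop:ModelCheckAccPath} --- that $\macnet$ is a model of $\macltlfo$ iff neither product $\macfsa_{\macnet}\times\macfsa_{\lnot\macltlfo}$ nor $\macfsa_{\macnet}\times\macfsba_{\lnot\macltlfo}$ has an accepting path --- into a bounded family of integer linear feasibility tests, and then invoke the \macfpt{} algorithm for \macilp{} in a bounded number of variables \cite{Lenstra83,Kannan87,FT87}. The crucial structural point is that both product graphs have size bounded by a function of the parameters only: $|\macstates_{\macnet}| = 2^{|\macsplaces|}$ with $|\macsplaces|\le \macvcnum+|\macltlfo|+4^{2^{2\macvcnum}}$, the alphabet $\macplvars\cup\{\bot\}$ has at most $4^{2^{2\macvcnum}}+1$ letters, and the automata $\macfsa_{\lnot\macltlfo}$, $\macfsba_{\lnot\macltlfo}$ supplied by B\"uchi's theorem \cite{Buchi} have size a computable (non-elementary) function of $|\macltlfo|$; hence both products have numbers of states and edges bounded by some $h(\macvcnum,|\macltlfo|)$. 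Moreover everything is computable in \macfpt{} time: the transition relation $\mactransrel_{\macnet}$, the edge constraint $\macedgeconstr$ and the set $\macfinss_{\macnet}$ of \defref{def:ConstrAutNet} are obtained by testing each candidate tuple against all (polynomially many) transitions of $\macnet$, and the product construction of \defref{def:NetAutProd} is then immediate. The only quantities not bounded by the parameters are the values $\macedgeconstr(\macplvar)\le|\macplaces|$, which are polynomial in $|\macnet|$.

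For the finite component, I would decide the existence of an accepting path in $\macfsa_{\macnet}\times\macfsa_{\lnot\macltlfo}$ by enumerating all pairs $(\macstateo',S)$ where $\macstateo'$ is a product state meeting the end-of-path requirement of \defref{def:NetAutProd} and $S$ is a set of edges whose underlying undirected graph is connected and contains both the start state $\macmark_{0}\macrestr{\macsplaces}$ (paired with a start state of $\macfsa_{\lnot\macltlfo}$) and $\macstateo'$; for each such pair I solve the integer linear program in the variables $\{\macpath(\macedge)\mid\macedge\in S\}$ consisting of the flow-conservation constraints~\eqref{eq:GraphPathILPConstr} with source the start state and sink $\macstateo'$, the constraints $\macpath(\macedge)\ge1$ for $\macedge\in S$, the upper bounds $\sum\{\macpath(\macedge)\mid\macedge\in S\text{ labelled }\macplvar\}\le\macedgeconstr(\macplvar)$ for every interface $\macplvar$, and the equalities $\sum\{\macpath(\macedge)\mid\macedge\in S\text{ labelled }\macplvar\}=\macedgeconstr(\macplvar)$ for exactly those $\macplvar$ for which the marking part of $\macstateo'$ still enables a transition consuming a token from $\macplaces_{\macplvar}\setminus\macsplaces$, as demanded by the last clause of \defref{def:ConstrAutAcc}. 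By \lemref{lem:GraphPathILPConstr} a feasible solution is precisely a valid run of the required form (with the obvious adjustments if the start state and $\macstateo'$ coincide), and conversely the edge set used by any accepting path is one of the enumerated connected sets $S$; so an accepting path exists iff one of these programs is feasible. Each program has at most $h(\macvcnum,|\macltlfo|)$ variables and integer data of size $\mathit{poly}(|\macnet|)$, hence is solved in time $f(\macvcnum,|\macltlfo|)\,\mathit{poly}(|\macnet|)$ by \cite{Lenstra83,Kannan87,FT87}, and there are at most $h'(\macvcnum,|\macltlfo|)$ programs.

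The infinite component $\macfsa_{\macnet}\times\macfsba_{\lnot\macltlfo}$ needs one extra idea. Because the edge constraints force each non-$\bot$-labelled edge to be used only finitely often, every valid infinite run is eventually a walk on $\bot$-labelled edges, so every accepting path is a \emph{lasso}: a valid finite run from the start state to some B\"uchi-accepting state $\macstateo''\in\macstates_{\macnet}\times\macfinss_{\macfsba}$, followed by a $\bot$-labelled cycle through $\macstateo''$ repeated forever. Accordingly, for each such $\macstateo''$ I test (i) that $\macstateo''$ lies on a non-empty cycle of $\bot$-labelled edges, which is a plain reachability question in a subgraph of the product decided in time polynomial in $h(\macvcnum,|\macltlfo|)$, and (ii) that $\macstateo''$ is reachable from the start state by a valid finite run obeying the upper-bound edge constraints, which is exactly the integer-linear-programming test of the previous paragraph with sink $\macstateo''$ and with the boundary equalities dropped (no finiteness clause applies to an infinite run, and the empty run is allowed when $\macstateo''$ is the start state). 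An accepting path in $\macfsa_{\macnet}\times\macfsba_{\lnot\macltlfo}$ exists iff some $\macstateo''$ passes both tests.

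Combining the two components with \propref{prop:ModelCheckAccPath} gives an algorithm running in time $f(\macvcnum,|\macltlfo|)\,\mathit{poly}(|\macnet|)$, as required. I do not expect a single hard step so much as careful bookkeeping: establishing that $\macstates_{\macnet}$ and $\macplvars$, and therefore the whole product graphs, depend only on the parameters; faithfully translating the side conditions of \defref{def:ConstrAutAcc} --- especially the equalities forced at the last state of a finite accepting path --- into the integer programs; and the lasso decomposition for the B\"uchi component, which works only because the edge constraints confine all but finitely many steps of an infinite valid run to $\bot$-labelled edges.
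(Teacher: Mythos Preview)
Your proposal is correct and follows essentially the same route as the paper: reduce via \propref{prop:ModelCheckAccPath} to accepting-path existence in the two products, encode paths as integer flows via \lemref{lem:GraphPathILPConstr} together with the interface upper bounds and the end-state equalities of \defref{def:ConstrAutAcc}, handle the B\"uchi side by observing that the tail of any infinite valid run uses only $\bot$-edges (the paper phrases this as reaching a state in a $\bot$-edge strongly connected component containing an accepting state, you phrase it as a lasso through an accepting state on a $\bot$-cycle --- these are equivalent), and conclude by \macilp{} feasibility in a parameter-bounded number of variables. Your explicit enumeration of connected edge-supports $S$ with the constraints $\macpath(\macedge)\ge 1$ is a mild refinement over the paper's presentation that makes the connectivity hypothesis of \lemref{lem:GraphPathILPConstr} hold on the nose; the paper leaves this implicit.
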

\begin{proof}
  By \propref{prop:ModelCheckAccPath}, it is enough to check that
  there is no accepting paths in $\macfsa_{\macnet} \times
  \macfsa_{\lnot \macltlfo}$ and $\macfsa_{\macnet} \times
  \macfsba_{\lnot \macltlfo}$. To check the existence of accepting
  paths in $\macfsa_{\macnet} \times \macfsba_{\lnot \macltlfo}$, we
  have to check if from some initial state in $\{\macmark_{0}
  \macrestr{\macsplaces}\}\times \macstates_{0 \macfsba}$, we can reach
  some vertex in a maximal strongly connected component induced by
  $\bot$-labelled edges, which contains some states from
  $\macstates_{\macnet} \times \macfinss_{\macfsba}$. For every such
  initial state $\macstateo$ and a vertex $\macstateo'$ in such a
  strongly connected component, check the feasibility of
  \eqref{eq:GraphPathILPConstr} along with the following constraint
  for each interface $\macplvar$:
  \begin{align}
    \label{eq:EdgeConstr}
    \sum_{\macedge \text{ is } \macplvar-\text{ labelled}}
    \macpath(\macedge) \le \macedgeconstr(\macplvar)
  \end{align}

  To check the existence of accepting paths in $\macfsa_{\macnet}
  \times \macfsa_{\lnot \macltlfo}$, check the feasibility of
  \eqref{eq:GraphPathILPConstr} and \eqref{eq:EdgeConstr} for every
  state $\macstateo$ in $\{\macmark_{0} \macrestr{\macsplaces}\}
  \times \macstates_{0 \macfsa}$ and every state $(\macplaces_{1},
  \macstateo'')$ in $\macfinss_{\macnet} \times \macstates_{\macfsa}$
  with some $\macstateo_{\macfinss} \in \macfinss_{\macfsa}$ such that
  $(\macstateo'', \macplaces_{1} \cap \macplaces_{\macltlfo},
  \macstateo_{\macfinss}) \in \mactransrel_{\macfsa}$. If some marking
  $\macmark$ with $\macmark\macrestr{\macsplaces} = \macplaces_{1}$
  enables some transition removing a token from some place with
  interface $\macplvar$, then for each such interface, add the
  following constraint:
  \begin{align}
    \label{eq:EdgeConstrMax}
    \sum_{\macedge \text{ is } \macplvar-\text{ labelled}}
    \macpath(\macedge) = \macedgeconstr(\macplvar)
  \end{align}

  The variables in the above \macilp{} instances are
  $\macpath(\macedge)$ for each edge $\macedge$. The number of
  variables in each \macilp{} instance is bounded by some function of
  the parameters. As \macilp{} is \macfpt{} when parameterized by the
  number of variables \cite{Kannan87,Lenstra83,FT87}, the result
  follows.
\qed
\end{proof}

The dependence of the running time of the above algorithm on formula
size is non-elementary if the formula is MSO \cite{ARM75}. The
dependence reduces to single exponential in case of LTL formulas
\cite{Vardi96}. The dependence on vertex cover number is dominated by
the running time of \macilp{}, which is singly exponential in the
number of its variables. The number of variables in turn depends on
the number of VC-interfaces (\defref{def:placeVariety}). In the worst
case, this can be triply exponential but a given $1$-safe Petri net
need not have all possible VC-interfaces.

\section{Conclusion}
The main idea behind the \macfpt{} upper bound for MSO/LTL model
checking is the fact that the problem can be reduced to graph
reachability and hence to \macilp{}. It remains to be seen if such
techniques or others can be applied for branching time logics such as
CTL.

We have some negative results with pathwidth and benefit depth
as parameters and a positive result with vertex cover number as
parameter. We think it is a challenging problem to identify other
parameters associated with $1$-safe Petri nets for which standard
problems in the concurrency literature are \macfpt{}.
Another direction for further work, suggested by a referee,
is to check if the upper bound can be extended to other classes of 
Petri nets such as communication-free nets. 

The results of \secref{sec:twhardness} proves hardness for the lowest
level of the W-hierarchy.  It remains to be seen if the
lower bounds could be made tighter. The parameterized classes
\macparanp{} and \macxp{} include the whole W-hierarchy. Lower bounds
or upper bounds corresponding to these classes would be interesting.

\bibliographystyle{plain}
\bibliography{references}

\end{document}